\newtheorem{theorem}{Theorem}[section]
\newtheorem{corollary}{Corollary}[section]
\newcommand{\btheta}{{\boldsymbol{\theta}}}
\newcommand{\bvtheta}{{\boldsymbol{\vartheta}}}
\newcommand\numberthis{\addtocounter{equation}{1}\tag{\theequation}}
\newcommand{\vecmu}{\mbox{\boldmath$\mu$}}
\newcommand{\vecw}{\mathbf{w}}
\newcommand{\vect}{\mathbf{t}}
\newcommand{\vecu}{\mathbf{u}}
\newcommand{\vecX}{\mathbf{X}}
\newcommand{\vecV}{\mathbf{V}}
\newcommand{\vecz}{\mathbf{z}}
\newcommand{\vecZ}{\mathbf{Z}}
\newcommand{\matdel}{\mathbf\Delta}
\newcommand{\matSig}{\mathbf\Sigma}
\newcommand{\matPsi}{\mathbf\Psi}
\newcommand{\tr}{\,\mbox{tr}}
\newcommand{\vecA}{\mathbf{A}}
\newcommand{\vecE}{\mathbf{E}}
\newcommand{\matm}{\mathbf{M}}
\newcommand{\tenM}{\textgoth{M}}
\newcommand{\tenA}{\textgoth{A}}
\newcommand{\tenX}{\textgoth{X}}
\newcommand{\tenV}{\textgoth{V}}
\newcommand{\tenT}{\textgoth{T}}
\newcommand{\ident}{\mathbf{I}}
\newcommand{\vecalpha}{\mbox{\boldmath$\alpha$}}
\newcommand{\vecDelta}{\mbox{\boldmath$\Delta$}}
\newcommand{\vecalp}{\boldsymbol{\alpha}}
\newcommand{\fX}{\mathscr{X}}
\newcommand{\fV}{\mathscr{V}}
\newcommand{\fZ}{\mathscr{Z}}
\newcommand{\vecc}{\text{vec}}
\DeclareMathOperator{\EV}{\mathbb{E}}
\newcommand{\half}{\frac{1}{2}}
\newcommand{\mhalf}{-\frac{1}{2}}
\newcommand{\T}{^{\top}}
\newcommand{\inv}{^{-1}}
\newcommand{\vecn}{\mathbf{n}}
\newcommand{\vecM}{\mathbf{M}}
\newcommand{\vecMone}{\mathbf{M}_{(1)}}
\newcommand{\cholDeltaI}[1]{\vecDelta_{#1}^{\mhalf}}
\newcommand{\norm}[1]{\left\lVert#1\right\rVert}
\newcommand{\vecej}{\mathbf{e}_{j}}
\newcommand{\vecejT}{\mathbf{e}_{j}\T}
\title{Four Skewed Tensor Distributions}
\author[1]{Michael P.B. Gallaugher\thanks{$^*$Corresponding author. Email: Michael\_Gallaugher@baylor.edu}}
\author[2]{Peter A. Tait}
\author[2]{Paul D. McNicholas}
\affil[1]{Department of Statistical Science, Baylor University, Waco, Texas, USA}
\affil[2]{Department of Mathematics and Statistics, McMaster University, Ontario, Canada}
\date{}
\begin{document}

\maketitle{}

\begin{abstract}
With the rise of the ``big data" phenomenon in recent years, data is coming in many different complex forms. One example of this is multi-way data that come in the form of higher-order tensors such as coloured images and movie clips. Although there has been a recent rise in models for looking at the simple case of three-way data in the form of matrices, there is a relative paucity of higher-order tensor variate methods. The most common tensor distribution in the literature is the tensor variate normal distribution; however, its use can be problematic if the data exhibit skewness or outliers. Herein, we develop four skewed tensor variate distributions which to our knowledge are the first skewed tensor distributions to be proposed in the literature, and are able to parameterize both skewness and tail weight. Properties and parameter estimation are discussed, and real and simulated data are used for illustration.
\end{abstract}

\section{Introduction}
In the last decade, data is coming in increasingly complex structures, and therefore traditional statistical methods are often either not ideal or not applicable. One such complex structure is multiway or tensor type data. The simplest of these forms is three-way data which come in the form of matrices or order-2 tensors. In the last few years, there have been numerous examples of modelling three-way data and specifically skewed three-way data, including the development of four skewed matrix variate distributions \citep{gallaugher17,gallaugher19} as well as the use of these in a mixture-model setting for clustering and classification \citep{gallaugher18a, gallaugher19b}. Transformation methods have also been applied to three-way data such as the work by \cite{melnykov18}. Examples of three-way data include multivariable longitudinal data as well as greyscale images. 

Although these aforementioned methods are useful, they are nevertheless restricted to three-way data. More interesting data types such as coloured images and movie clips (black and white or coloured) come in the form of multilinear data or order-$D$ tensors. For example, black and white movie clips consist of greyscale images (matrices) collected at different time points, and therefore would come in the form of an order-$3$ pixel intensity tensor. In the case of coloured images, the data would again come in the form of an order-$3$ tensor with pixel intensity matrices for, generally three, different colours. Finally, coloured movie clips would come in the form of a order-$4$ tensor represented as a hyper-cuboid of pixel intensities. 

Currently, to our knowledge, analysis of such tensor type data is restricted to the multilinear/tensor variate normal (TVN) distribution and was used very recently in the area of clustering and classification \citep{tait2020}. Although mathematically tractable, the assumption of symmetry is often violated. Moreover, outliers may be present in the data which can be problematic. To fill this gap, we present four skewed tensor variate distributions which can be considered generalizations of their matrix variate counterparts, and are able to model both skewness and excess kurtosis.

An outline of our contributions is now presented. We first present a detailed derivation of a tensor variate skew-$t$ (TVST) distribution via a tensor extension of the normal variance mean mixture model. Three other tensor distributions also fall naturally out of this derivation, namely the tensor variate generalized hyperbolic (TVGH), variance gamma (TVVG), and normal inverse Gaussian (TVNIG) distributions. Furthermore, a tensor variate shifted asymmetric Laplace (TVSAL) distribution comes out as a special case of the TVVG distribution. Properties of these four distributions including expectation, characteristic functions, matricizations, and relationships to other distributions are then discussed. Two avenues are explored for parameter estimation, both using an expectation conditional maximization algorithm. The first, although more mathematically tractable, can be computationally inefficient as it requires the matricization of the tensor along each of its modes. The second method proposed makes use of only mode-one matricizations and permutation operators, and is far more computationally feasible. Finally, the proposed distributions are fit to colour images in the form of order-3 tensors. We finish with a discussion and possible paths for future work that include incorporating these distributions in the mixture model setting for use in clustering and classification, as well as dimension reduction techniques.

The remainder of this paper is laid out as follows. Section~2 presents a detailed background on the tensor variate normal distribution as well as the inverse and generalized inverse Gaussian distributions which will be used in the formulation of the four skewed tensor distributions. In Section~3 the four skewed tensor distributions are derived and their properties discussed. Two parameter estimation procedures are discussed in Section~4. Simulation and read data analyses are presented in Section~5, and we finish with a discussion and avenues for future work in Section~6.

\section{Background}

\subsection{Tensor Variate Normal Distribution}
As with the univariate, multivariate, and matrix variate cases, the TVN distribution is the most well-known tensor variate distribution, and its form, mathematical properties, and parameter estimation are thoroughly discussed in the literature. 

If $\mathcal{X}$ is a random order-$D$ tensor, with dimensional lengths $n_1\times n_2\times \cdots \times n_D = \vecn$ , with realization $\tenX$, then it follows that the density function of a TVN distribution, $\mathcal{N}_\vecn\left(\tenM,\bigotimes_{d=1}^D\matdel_d\right)$, can be written as
\begin{equation}
\begin{split}
f(\tenX|\tenM, \matdel_1,\ldots,\matdel_D&)=\\&(2\pi)^{\frac{-n^{*}}{2}}\prod_{d=1}^D|\matdel_d|^{-\frac{n^{*}}{2n_d}} \exp\left\{-\frac{1}{2}\vecc(\tenX-\tenM)\T\bigotimes_{d=1}^D\matdel_d^{-1}\vecc(\tenX-\tenM)\right\},
\end{split}
\label{eq:tensnorm}
\end{equation}
where $\tenM$ is the mean tensor, $\vecc(\cdot)$ is the tensor vectorization operator, $n^{*} = \prod_{d=1}^Dn_d$, and 
$$\bigotimes_{d=1}^D\matdel_d=\matdel_1\otimes\matdel_2\otimes\cdots\otimes\matdel_D,$$
where $\otimes$ represents the Kronecker product \citep{ohlson2013}. Note that $\text{Cov}(\vecc(\mathcal{X}))=\bigotimes_{d=1}^D\matdel_d$, and, for order-3 tensors, we consider the first mode to be the rows, the second mode to be the columns, and the third mode the slices.

One important property that we mention here, and is shown by \cite{ohlson2013}, is that the exponent in the density function \eqref{eq:tensnorm} can be written as
\begin{equation}
-\frac{1}{2}\tr[\matdel_j^{-1}(\vecX_{(j)}-\matm_{(j)})\T\bigotimes_{d\ne j}\matdel_{d}^{-1}(\vecX_{(j)}-\matm_{(j)})],
\label{eq:exp}
\end{equation}
for $1\le j \le D$, where $\vecV_{(j)}$ is the matricization along mode $j$ for a tensor $\tenV$.
Additionally, if $\mathcal{X}$ is a $q\times r$ random matrix then  
\begin{equation}
\mathcal{X}\sim \mathcal{N}_{q\times r}(\matm,\matSig,\matPsi)\iff\vecc(\mathcal{X})\sim \phi_{qr}(\vecc(\matm),\matPsi\otimes\matSig) 
\label{eq:MVMV},
\end{equation}   
where $\mathcal{N}_{q\times r}(\cdot)$ represents the matrix variate normal distribution with mean matrix $\matm$, $q\times q$ row covariance matrix $\matSig$, and $r\times r$ column covariance matrix $\matPsi$ and $\phi_{qr}(\cdot)$ represents the multivariate normal distribution of dimension $np$. Between \eqref{eq:exp} and \eqref{eq:MVMV}, we easily arrive at Theorem \ref{the:equiv}.

\begin{theorem}
If $\fX$ is an order-$D$ random tensor of dimension $\vecn$ then the following statements are equivalent.
\begin{enumerate}
\item $\fX\sim \mathcal{N}_{\vecn}\left(\tenM,\bigotimes_{d=1}^D\matdel_d\right)$
\item $\mathcal{X}_{(j)}\sim \mathcal{N}_{\frac{n^*}{n_j}\times n_j}\left(\matm_{(j)},\bigotimes_{d\ne j}\matdel_d, \matdel_j\right)$
\item $\vecc(\mathcal{X}_{(j)})\sim \phi_{n^*}\left(\vecc({\matm_{(j)}}),\matdel_j\otimes\bigotimes_{d\ne j}\matdel_d\right)$
\end{enumerate}
\label{the:equiv}
\end{theorem}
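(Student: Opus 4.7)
The plan is to establish the equivalences by a direct comparison of densities, using the two results already isolated in the excerpt: the alternative trace form of the TVN quadratic, equation \eqref{eq:exp}, and the matrix–vector normal correspondence \eqref{eq:MVMV}. I would prove (1) $\Longleftrightarrow$ (2) by matching the TVN density against the matrix variate normal density of $\mathcal{X}_{(j)}$, and then (2) $\Longleftrightarrow$ (3) as an immediate application of \eqref{eq:MVMV}. Chaining these gives the full equivalence.

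For (1) $\Longleftrightarrow$ (2), I would start from \eqref{eq:tensnorm} and rewrite the exponent using \eqref{eq:exp}, obtaining
\[
-\tfrac{1}{2}\tr\!\left[\matdel_j^{-1}(\vecX_{(j)}-\matm_{(j)})\T\,\bigotimes_{d\ne j}\matdel_d^{-1}\,(\vecX_{(j)}-\matm_{(j)})\right].
\]
A single application of the cyclic property of the trace recasts this as
\[
-\tfrac{1}{2}\tr\!\left[\bigl(\textstyle\bigotimes_{d\ne j}\matdel_d\bigr)^{-1}(\vecX_{(j)}-\matm_{(j)})\,\matdel_j^{-1}\,(\vecX_{(j)}-\matm_{(j)})\T\right],
\]
which is exactly the exponent of a matrix variate normal with row covariance $\bigotimes_{d\ne j}\matdel_d$ and column covariance $\matdel_j$ on an $(n^*/n_j)\times n_j$ matrix.

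The normalizing constants then need to be shown to agree. The factor $(2\pi)^{-n^*/2}$ is common, so the work reduces to the determinant identity
\[
\Bigl|\textstyle\bigotimes_{d\ne j}\matdel_d\Bigr|^{-n_j/2}\,|\matdel_j|^{-n^*/(2n_j)} \;=\; \prod_{d=1}^{D}|\matdel_d|^{-n^*/(2 n_d)}.
\]
This follows from the standard Kronecker determinant rule $|A\otimes B|=|A|^{m}|B|^{n}$ for $A$ of order $n$ and $B$ of order $m$, applied inductively to $\bigotimes_{d\ne j}\matdel_d$, which yields $|\bigotimes_{d\ne j}\matdel_d|=\prod_{d\ne j}|\matdel_d|^{n^*/(n_j n_d)}$. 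I anticipate that this bookkeeping with Kronecker determinants will be the most error-prone step, but it is not deep; the rest of the matching is a mechanical rewriting. Since the bijection $\tenX\leftrightarrow\vecX_{(j)}$ has unit Jacobian (it is merely a reindexing of entries), no extra factor arises, establishing (1) $\Longleftrightarrow$ (2).

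Finally, for (2) $\Longleftrightarrow$ (3), I would simply invoke \eqref{eq:MVMV} with $q=n^*/n_j$, $r=n_j$, mean matrix $\matm_{(j)}$, row covariance $\matSig=\bigotimes_{d\ne j}\matdel_d$, and column covariance $\matPsi=\matdel_j$. The vectorized covariance is then $\matPsi\otimes\matSig = \matdel_j\otimes\bigotimes_{d\ne j}\matdel_d$, precisely the covariance appearing in statement (3). Combining (1) $\Longleftrightarrow$ (2) with (2) $\Longleftrightarrow$ (3) gives the theorem.
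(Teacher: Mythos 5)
Your proposal is correct and follows essentially the same route as the paper, which derives the theorem directly from the trace form of the exponent \eqref{eq:exp} and the matrix--vector correspondence \eqref{eq:MVMV}; your determinant bookkeeping and the cyclic-trace step both check out. You have simply written out in full the details the paper leaves implicit.
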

Regarding notation, we will use $\mathcal{X}_{(j)}$ to represent the matricization along mode $j$ of a random tensor $\fX$.
Due to the relationship between the tensor variate and multivariate normal distribution, it is simple to derive the characteristic function as discussed in \cite{ohlson2013}. If $\fX\sim\mathcal{N}_{\vecn}\left(\tenM,\bigotimes_{d=1}^D\matdel_d\right)$ and $\tenT$ is an $\vecn$ dimensional order-$D$ tensor, then the characteristic function of $\fX$ is given by
\begin{equation}
\mathcal{C}(\tenT)\colonequals\mathbb{E}[\exp\{i\vecc(\tenT)'\vecc(\fX)\}]=\exp\left\{i\vecc(\tenT)'\vecc(\tenM)-\frac{1}{2}\vecc(\tenT)'\bigotimes_{d=1}^D\matdel_d\vecc(\tenT)\right\}.
\label{eq:CFN}
\end{equation}

\subsection{Inverse and Generalized Inverse Gaussian Distribution}
The derivation of the TVGH and TVNIG distributions, as well as parameter estimation for all four skewed tensor variate distributions, will rely on the inverse and generalized inverse Gaussian distributions. 
A random variable $Y$ has an inverse Gaussian distribution if its probability density function can be written
$$
f(y~|~\delta,\gamma)=\frac{\delta}{\sqrt{2\pi}}\exp\{\delta\gamma\}y^{-\frac{3}{2}}\exp\left\{-\frac{1}{2}\left(\frac{\delta^2}{y}+\gamma^2y\right)\right\},
$$
for $y>0$ and $\delta,\gamma>0$. We denote this distribution by $\text{IG}(\delta,\gamma)$. In the development of the tensor variate NIG distribution, we will consider the standard case where $\delta=1$.

A random variable $Y$ has a generalized inverse Gaussian (GIG) distribution with parameters $a, b$ and $\lambda$ if its density function can be written as
$$
f(y|a, b, \lambda)=\frac{\left(\frac{a}{b}\right)^{\frac{\lambda}{2}}y^{\lambda-1}}{2K_{\lambda}(\sqrt{ab})}\exp\left\{-\frac{ay+\frac{b}{y}}{2}\right\},
$$
where
$$
K_{\lambda}(x)=\frac{1}{2}\int_{0}^{\infty}y^{\lambda-1}\exp\left\{-\frac{x}{2}\left(y+\frac{1}{y}\right)\right\}dy
$$
is the modified Bessel function of the third kind with index $\lambda$.
Finally, the characteristic function of the $\text{GIG}(a,b,\lambda)$ distribution is given by
$$
\mathcal{C}_{\text{GIG}}(t~|~a,b,\lambda)=\left(\frac{a}{a-2it}\right)^{\frac{\lambda}{2}}\frac{K_{\lambda}(\sqrt{b(a-2it)})}{K_{\lambda}(\sqrt{ab})}.
$$
An alternative parameterization for the GIG distribution proposed by \cite{browne15} and used to develop the generalized hyperbolic distribution, will be used to develop the tensor variate generalized hyperbolic. The density using this parameterization is 
\begin{equation}
g(y|\omega,\eta,\lambda)= \frac{\left({y}/{\eta}\right)^{\lambda-1}}{2\eta K_{\lambda}(\omega)}\exp\left\{-\frac{\omega}{2}\left(\frac{y}{\eta}+\frac{\eta}{y}\right)\right\},
\label{eq:I}
\end{equation}
where $\omega=\sqrt{ab}$ and $\eta=\sqrt{a/b}$. For notational clarity, we will denote the parameterization given in \eqref{eq:I} by $\text{I}(\omega,\eta,\lambda)$.

\section{Methodology}
\subsection{Four Skewed Tensor Variate Distributions}
In the multivariate and matrix variate cases, the normal variance mean mixture model is a computationally efficient way to introduce skewness. In the multivariate case, this formulation assumes that the random vector $\vecX$ can be written in the form
$$
\vecX=\vecmu+W\vecalp+\sqrt{W}\vecV,
$$
where $\vecmu$ is a location vector, $\vecalp$ is a skewness vector, $\vecV\sim\phi({\bf 0},\boldsymbol{\Sigma})$, and $W>0$ is a positive random variable. This was extended to the matrix variate case by \cite{gallaugher17,gallaugher19} and this is now extended to the tensor case to derive four skewed tensor variate distributions.

We show below, a derivation of a TVST distribution. Further details are available in Appendix A. The TVGH, TVVG (and TVSAL) and TVNIG distributions can be derived in much the same way and we therefore only give their densities.

We will say that an $\vecn$ order-$D$ random tensor $\fX$ has a tensor variate skew $t$ distribution, $\text{TVST}_{{\bf n}}(\tenM,\tenA,\bigotimes_{d=1}^D\matdel_d,\nu)$ if $\fX$ can be written as
\begin{equation}
\fX=\tenM+W\tenA+\sqrt{W}\fV,
\label{eq:VMM}
\end{equation}
where $\tenM$ and $\tenA$ are $\vecn$ dimensional order-$D$ tensors, $\fV \sim \mathcal{N}_{{\bf n}}\left(\mathbb{O}, \bigotimes_{d=1}^D\matdel_d\right)$ and $W\sim \text{Inv-Gamma}\left(\frac{\nu}{2},\frac{\nu}{2}\right)$ where $\text{Inv-Gamma}(\cdot)$ represents the inverse-gamma distribution. Similar to its multivariate \citep{murray14b} and matrix variate \citep{gallaugher17} counterparts, $\tenM$ is a location tensor, $\tenA$ is the skewness tensor, $\matdel_1,\ldots,\matdel_D$ are scale matrices, and $\nu$ is the degrees of freedom.
It then follows that
$$
\fX|W=w\sim \mathcal{N}_{\bf n}\left(\tenM+w\tenA,w\bigotimes_{d=1}^D\matdel_d\right)
$$
and thus the joint density of $\fX$ and $W$ is
\begin{align*}
f(\tenX,w|\bvtheta)&=f(\tenX|W=w)f(w)\\
&= \frac{\frac{\nu}{2}^{\frac{\nu}{2}}}{(2\pi)^{\frac{n^*}{2}}\prod_{d=1}^D| \matdel_d |^{\frac{n^*}{2n_d}}\Gamma(\frac{\nu}{2})}w^{-\frac{\nu+n^*}{2}-1}\\
& \hspace{0.2in} \times \exp\left\{-\frac{1}{2w}\left(\vecc(\tenX-\tenM-w\tenA)\T\bigotimes_{d=1}^D\matdel_{d}^{-1}\vecc(\tenX-\tenM-w\tenA)+\nu\right)\right\}, \numberthis \label{eqn:joint}
\end{align*}
where $\bvtheta=(\tenM,\tenA,\matdel_1,\ldots,\matdel_D,\nu)$.
%
We note that the exponential term in \eqref{eqn:joint} can be written
\begin{equation*}
\exp\left\{\left(\vecc(\tenX-\tenM)\T\bigotimes_{d=1}^D\matdel_{d}^{-1}\vecc(\tenA)\right)\right\} \exp\left\{-\frac{1}{2}\left[\frac{\delta(\tenX;\tenM,\bigotimes_{d=1}^D\matdel_{d}^{-1})+\nu}{w}+w\rho(\tenA;\bigotimes_{d=1}^D\matdel_{d}^{-1})\right]\right\},
\end{equation*}
where
$\delta(\cdot)=\vecc(\tenX-\tenM)\T\bigotimes_{d=1}^D\matdel_{d}^{-1}\vecc(\tenX-\tenM)$ and 
$\rho(\cdot)=\vecc(\tenA)\T\bigotimes_{d=1}^D\matdel_{d}^{-1}\vecc(\tenA)$.

Therefore, the marginal density of $\tenX$ is
\begin{align*}
f(\tenX)&=\int_{0}^{\infty}f(\tenX,w) dw\\
&=\frac{\frac{\nu}{2}^{\frac{\nu}{2}}}{(2\pi)^{\frac{n^*}{2}}\prod_{d=1}^D| \matdel_d |^{\frac{n^*}{2n_d}} \Gamma(\frac{\nu}{2})}\exp\left\{\vecc(\tenX-\tenM)\T\bigotimes_{d=1}^D\matdel_{d}^{-1}\vecc(\tenA)\right\} \\
& \hspace{0.25in} \times \int_{0}^\infty w^{-\frac{\nu+np}{2}-1}\exp\left\{-\frac{1}{2}\left[\frac{\delta(\tenX;\tenM,\bigotimes_{d=1}^D\matdel_{d}^{-1})+\nu}{w}+w\rho(\tenA,\bigotimes_{d=1}^D\matdel_{d}^{-1})\right]\right\}dw
\end{align*}
Making the change of variables
$$
y=\frac{\sqrt{\rho(\tenA,\bigotimes_{d=1}^D\matdel_{d}^{-1})}}{\sqrt{\delta(\tenX;\tenM,\bigotimes_{d=1}^D\matdel_{d}^{-1})+\nu}}w,
$$
we arrive at the density
\begin{equation}
    \label{eqn:marg}
\begin{split}
\nonumber f_{\text{TVST}}(\tenX|\bvtheta)=&\frac{2\left(\frac{\nu}{2}\right)^{\frac{\nu}{2}}\exp\left\{\vecc(\tenX-\tenM)\T\bigotimes_{d=1}^D\matdel_{d}^{-1}\vecc(\tenA) \right\} }
{(2\pi)^{\frac{n^*}{2}}\prod_{d=1}^D| \matdel_d |^{\frac{n^*}{2n_d}} \Gamma(\frac{\nu}{2})}
  \left(\frac{\delta(\tenX;\tenM,\bigotimes_{d=1}^D\matdel_{d}^{-1})+\nu}{\rho(\vecA,\bigotimes_{d=1}^D\matdel_{d}^{-1})}\right)^{-\frac{\nu+n^*}{4}} \\
  & \times K_{-\frac{\nu+n^*}{2}}\left(\sqrt{\left[\rho(\tenA,\bigotimes_{d=1}^D\matdel_{d}^{-1}))\right]\left[\delta(\tenX;\tenM,\bigotimes_{d=1}^D\matdel_{d}^{-1})+\nu\right]}\right)
  \end{split}
\end{equation}
for $\nu\in\mathbb{R}^+$. For notational purposes we will denote this distribution by $\text{TVST}(\tenM,\tenA,\bigotimes_{d=1}^D\matdel_d,\nu)$.

The density of $\fX$, as derived here, closely resembles, and can be considered a multilinear extension of, the density of the multivariate skew-$t$ distribution given in \cite{murray14b} and the matrix skew-$t$ distribution of \cite{gallaugher17}.
The TVGH, TVVG, and TVNIG distributions are derived in much the same way using the same distributions for $W$ as used for their matrix variate counterparts \citep{gallaugher19}. Specifically, the TVGH distribution arises with $W\sim I(\omega,1,\lambda)$. Its density is given by
\begin{equation}
 \begin{split}
f_{\text{TVGH}}(\tenX|\bvtheta)=&\frac{\exp\left\{\vecc(\tenX-\tenM)\T\bigotimes_{d=1}^D\matdel_{d}^{-1}\vecc(\tenA) \right\} }
{(2\pi)^{\frac{n^*}{2}}\prod_{d=1}^D| \matdel_d |^{^{\frac{n^*}{2n_d}}} K_{\lambda}(\omega)}
  \left(\frac{\delta(\tenX;\tenM,\bigotimes_{d=1}^D\matdel_{d}^{-1})+\omega}{\rho(\vecA,\bigotimes_{d=1}^D\matdel_{d}^{-1})+\omega}\right)^{\frac{\lambda-\frac{n^*}{2}}{2}} \\ & \times
 K_{\lambda-n^*/2}\left(\sqrt{\left[\rho(\tenA,\bigotimes_{d=1}^D\matdel_{d}^{-1})+\omega)\right]\left[\delta(\tenX;\tenM,\bigotimes_{d=1}^D\matdel_{d}^{-1})+\omega\right]}\right)
\end{split}
\label{eq:GH}
\end{equation}
for $\lambda\in\mathbb{R}$, $\omega\in\mathcal{R}^+$. We will denote the tensor variate generalized hyperbolic distribution by $\text{TVGH}(\tenM,\tenA,\bigotimes_{d=1}^D\matdel_d,\lambda,\omega)$. This form is again similar to its multivariate \citep{browne15} and matrix variate \citep{gallaugher19} counterparts.

The TVVG distribution can be derived with $W\sim \text{Gamma}(\gamma,\gamma)$ and the resulting density is
\begin{equation}
 \begin{split}
f_{\text{TVVG}}(\tenX|\bvtheta)=&\frac{2\gamma^{\gamma}\exp\left\{\vecc(\tenX-\tenM)\T\bigotimes_{d=1}^D\matdel_{d}^{-1}\vecc(\tenA) \right\} }
{(2\pi)^{\frac{n^*}{2}}\prod_{d=1}^D| \matdel_d |^{^{\frac{n^*}{2n_d}}} \Gamma(\gamma)}
  \left(\frac{\delta(\tenX;\tenM,\bigotimes_{d=1}^D\matdel_{d}^{-1})}{\rho(\vecA,\bigotimes_{d=1}^D\matdel_{d}^{-1})+2\gamma}\right)^{\frac{\gamma-\frac{n^*}{2}}{2}} \\ & \times
 K_{\gamma-n^*/2}\left(\sqrt{\left[\rho(\tenA,\bigotimes_{d=1}^D\matdel_{d}^{-1})+2\gamma\right]\left[\delta(\tenX;\tenM,\bigotimes_{d=1}^D\matdel_{d}^{-1})\right]}\right),
\end{split}
\end{equation}
where $\gamma\in\mathbb{R}^+$. We will denote this distribution by $\text{TVVG}(\tenM,\tenA,\bigotimes_{d=1}^D\matdel_d,\gamma)$.

Finally, a TVNIG distribution can be derived with $W\sim \text{IG}(1,\kappa)$. Note that the standard form of the inverse Gaussian distribution is used here to allow for the determinants of the scale matrices to be unconstrained \citep{karlis09}. The resulting density function given by
\begin{equation}
 \begin{split}
f_{\text{TVNIG}}(\tenX|\bvtheta)&=\frac{2\exp\left\{\vecc(\tenX-\tenM)\T\bigotimes_{d=1}^D\matdel_{d}^{-1}\vecc(\tenA)+\kappa \right\} }
{(2\pi)^{\frac{n^*+1}{2}}\prod_{d=1}^D| \matdel_d |^{\frac{n^*}{2n_d}}}
  \left(\frac{\delta(\tenX;\tenM,\bigotimes_{d=1}^D\matdel_{d}^{-1})+1}{\rho(\vecA,\bigotimes_{d=1}^D\matdel_{d}^{-1})+\kappa^2}\right)^{-\frac{1+n^*}{4}} \\&\times
 K_{-\frac{1+n^*}{2}}\left(\sqrt{\left[\rho(\tenA,\bigotimes_{d=1}^D\matdel_{d}^{-1})+\kappa^2\right]\left[\delta(\tenX;\tenM,\bigotimes_{d=1}^D\matdel_{d}^{-1})+1\right]}\right),
\end{split}
\label{eq:NIG}
\end{equation}
where $\kappa\in\mathbb{R}^+$. We will use the notation $\text{TVNIG}(\tenM,\tenA,\bigotimes_{d=1}^D\matdel_d,\kappa)$ to refer to this distribution.
We note that like the TVST, the TVGH, TVVG, and TVNIG are similar in form to their multivariate \citep[][respectively]{browne15,smcnicholas17,karlis09} and their matrix variate \citep{gallaugher19} counterparts.
In fact, as with the tensor variate normal distribution, the four skewed distributions presented here are closely related to their lower order counterparts. These relationships are summarized in the form of the following corollary to Theorem \ref{the:equiv}.
\begin{corollary}
Let $\text{TVD}_{\bf n}(\tenM,\tenA,\bigotimes_{d=1}^D\matdel_d,\btheta)$ represent one of the four skewed tensor distributions of dimension ${\bf n}$, where $\btheta$ represents the additional parameters specific to the distribution. Let $\text{MVD}_{n\times p}(\matm,\vecA,\matSig,\matPsi,\btheta)$ represent the corresponding matrix variate distribution. Finally, let ${\text D}(\vecmu,\vecalp,\matSig,\btheta)$ represent the corresponding multivariate distribution. The following statements are then equivalent.
\begin{enumerate}
\item $\fX\sim \text{TVD}_{\vecn}\left(\tenM,\tenA,\bigotimes_{d=1}^D\matdel_d,\btheta\right)$
\item $\mathcal{X}_{(j)}\sim \text{MVD}_{\frac{n^*}{n_j}\times n_j}\left(\matm_{(j)},\vecA_{(j)},\bigotimes_{d\ne j}\matdel_d, \matdel_j,\btheta\right)$
\item $\vecc(\mathcal{X}_{(j)})\sim \text{D}_{n^*}\left(\vecc({\matm_{(j)}}),\vecc({\vecA_{(j)}}),\matdel_j\otimes\bigotimes_{d\ne j}\matdel_d,\btheta\right)$
\end{enumerate}
\end{corollary}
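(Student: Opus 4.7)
The plan is to leverage the common normal variance-mean mixture construction underlying all four skewed tensor distributions, and to push matricization and vectorization through that construction so as to reduce everything to the already-established Theorem \ref{the:equiv} for the tensor variate normal.

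First I would note that for each of the four families (TVST, TVGH, TVVG, TVNIG) the defining representation is
\[
\fX = \tenM + W \tenA + \sqrt{W}\, \fV,
\]
with $\fV \sim \mathcal{N}_{\vecn}(\mathbb{O},\bigotimes_{d=1}^{D}\matdel_d)$ and $W$ drawn from the appropriate positive scalar mixing law (Inv-Gamma, $\mathrm{I}(\omega,1,\lambda)$, Gamma, or IG). The key observation is that matricization along mode $j$ is a linear rearrangement of tensor entries, so it commutes with tensor addition and with scalar multiplication by $W$ and $\sqrt{W}$. Applying $(\cdot)_{(j)}$ to both sides gives
\[
\mathcal{X}_{(j)} = \matm_{(j)} + W\, \vecA_{(j)} + \sqrt{W}\, \fV_{(j)}.
\]
By Theorem \ref{the:equiv} applied to $\fV$, we have $\fV_{(j)} \sim \mathcal{N}_{\frac{n^*}{n_j}\times n_j}(\mathbb{O}, \bigotimes_{d\ne j}\matdel_d, \matdel_j)$. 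Thus $\mathcal{X}_{(j)}$ is exactly a matrix variate normal variance-mean mixture with the same scalar mixing variable $W$. Since the corresponding matrix variate skewed distribution $\mathrm{MVD}_{\frac{n^*}{n_j}\times n_j}$ is defined by precisely this mixture representation (as in \cite{gallaugher17,gallaugher19}), we conclude that (1) $\Longleftrightarrow$ (2).

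Next, I would vectorize the matricized identity. Using $\vecc(W \vecA_{(j)}) = W\,\vecc(\vecA_{(j)})$ and the normal relation \eqref{eq:MVMV}, which gives $\vecc(\fV_{(j)}) \sim \phi_{n^*}(\mathbf{0}, \matdel_j \otimes \bigotimes_{d\ne j}\matdel_d)$, we obtain
\[
\vecc(\mathcal{X}_{(j)}) = \vecc(\matm_{(j)}) + W\,\vecc(\vecA_{(j)}) + \sqrt{W}\,\vecc(\fV_{(j)}),
\]
which is the multivariate normal variance-mean mixture defining the multivariate skewed distribution $\mathrm{D}_{n^*}$. This yields (2) $\Longleftrightarrow$ (3) and closes the cycle.

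The main obstacle, and really the only subtlety, is ensuring that each of the four scalar mixing distributions produces, through this reduction, precisely the matrix variate and multivariate members of the corresponding family in the literature. Since each family is itself defined via the same scalar normal variance-mean mixture at every order, this verification amounts to a one-line appeal to the definitions in \cite{murray14b,browne15,karlis09,gallaugher17,gallaugher19} rather than any further density computation. In particular, because $W$ is shared across the three representations, no reparameterization of the skewness vector/tensor or of the mixing parameters $\btheta$ is required; only the covariance structure transforms via Theorem \ref{the:equiv}.
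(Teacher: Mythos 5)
Your proof is correct and follows essentially the same route as the paper, which simply notes that the result is ``an easy application of Theorem \ref{the:equiv} and the form of the variance mean mixture model'' --- you have merely written out the details of pushing matricization and vectorization through the representation $\fX=\tenM+W\tenA+\sqrt{W}\fV$. No further comment is needed.
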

The proof is an easy application of Theorem \ref{the:equiv} and the form of the variance mean mixture model.

\subsection{Expectations}
The expectations for these four distributions can be easily calculated using the following theorem.
\begin{theorem}
Suppose a random order-$D$ tensor $\fX$ of dimension $\vecn$ can be written in the form $\fX=\tenM+W\tenA+\sqrt{W}\fV$, where $\tenM$ and $\tenA$ are $\vecn$ dimensional order-$D$ tensors, $W\in\mathbb{R}^+$ is a positive random variable, and $\fV\sim\mathcal{N}(\mathbb{O},\bigotimes_{d=1}^D\matdel_d)$. Then, $\mathbb{E}[\fX]=\tenM+\mathbb{E}[W]\tenA$.
\label{the:Exp}
\end{theorem}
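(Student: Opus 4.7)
The plan is to prove the theorem by a straightforward application of the tower property of conditional expectation, exploiting the fact that the defining representation $\fX = \tenM + W\tenA + \sqrt{W}\fV$ already expresses $\fX$ as an affine function of two (assumed independent) random quantities $W$ and $\fV$. Expectation for tensors is taken entrywise, so linearity carries over from the scalar case to each coordinate of the tensor simultaneously, and the argument can be carried out without appealing to any specific form of the distribution of $W$.

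First, I would condition on $W$. Since $\fV \sim \mathcal{N}_{\vecn}(\mathbb{O},\bigotimes_{d=1}^{D}\matdel_d)$ implies $\mathbb{E}[\fV] = \mathbb{O}$ entrywise, and since $W$ is taken to be independent of $\fV$ (as is implicit in the normal variance–mean mixture construction in \eqref{eq:VMM}), for each fixed value $W = w$ one obtains
\[
\mathbb{E}[\fX \mid W = w] = \tenM + w\tenA + \sqrt{w}\,\mathbb{E}[\fV] = \tenM + w\tenA.
\]
Hence the conditional mean, viewed as a random tensor, is $\mathbb{E}[\fX \mid W] = \tenM + W\tenA$. Second, I would apply the tower property and linearity of expectation entrywise:
\[
\mathbb{E}[\fX] = \mathbb{E}\!\left[\mathbb{E}[\fX \mid W]\right] = \mathbb{E}[\tenM + W\tenA] = \tenM + \mathbb{E}[W]\,\tenA,
\]
which is exactly the claim.

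There is no real obstacle in this argument; the only points that deserve a brief mention in the write-up are (i) that $\mathbb{E}[\cdot]$ for a random tensor is defined componentwise, so the manipulations above are legitimate at every entry simultaneously, and (ii) that the independence of $W$ and $\fV$, which underlies the cancellation $\mathbb{E}[\sqrt{W}\fV] = \mathbb{E}[\sqrt{W}]\,\mathbb{E}[\fV] = \mathbb{O}$, is part of the variance–mean mixture construction. Since the statement holds for any positive random variable $W$ with finite mean, no distributional assumption on $W$ beyond integrability is invoked, which makes the same computation immediately applicable to the TVST, TVGH, TVVG (and TVSAL), and TVNIG cases by simply substituting the corresponding value of $\mathbb{E}[W]$.
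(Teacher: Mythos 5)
Your proof is correct and follows exactly the route the paper indicates: iterated (tower) expectation combined with the tensor variate normality of $\fX$ given $W$, so that $\mathbb{E}[\fX\mid W]=\tenM+W\tenA$ and the result follows by entrywise linearity. The paper omits the details, calling the argument trivial, and your write-up simply supplies them.
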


The proof of this theorem is a trivial use of iterative expectation, and the tensor variate normality of $\fX$ given $W$. 
Therefore, we have the following expectations:
\begin{align}
\fX\sim\text{TVST}\left(\tenM,\tenA,\bigotimes_{d=1}^D\matdel_d,\nu\right)&\implies\mathbb{E}[\fX]=\tenM+\frac{\nu}{\nu-2}\tenA \hspace{0.1in}(\nu>2)\\
\fX\sim\text{TVGH}\left(\tenM,\tenA,\bigotimes_{d=1}^D\matdel_d,\lambda,\omega\right)&\implies\mathbb{E}[\fX]=\tenM+\frac{K_{\lambda+1}(\omega)}{K_{\lambda}(\omega)}\tenA\\
\fX\sim\text{TVVG}\left(\tenM,\tenA,\bigotimes_{d=1}^D\matdel_d,\gamma\right)&\implies\mathbb{E}[\fX]=\tenM+\tenA\\
\fX\sim\text{TVNIG}\left(\tenM,\tenA,\bigotimes_{d=1}^D\matdel_d,\kappa\right)&\implies\mathbb{E}[\fX]=\tenM+\frac{1}{\kappa}\tenA \label{eq:ev_nig}
\end{align}

\begin{theorem}
	If we define the order-$D$ tensor $\fZ \sim \mathcal{N}\left(\mathbb{O}, \bigotimes_{d=1}^D\ident_{d}\right)$, we can use a tucker product \citep{kolda2009} to define $\fV = \fZ \times \matdel^\half = \fZ \times_1 \matdel_1^\half \times_2 \matdel_2^\half \cdots \times_D \matdel_D^\half$. Let an equivalent mode-1 matricized version of $\fV$ be $\vecV_{(1)} = \matdel_1^\half\vecZ_{(1)}\left(\bigotimes_{d=D}^2\matdel_d^\half\right)\T$ and of $\fX$ be $\vecX_{(1)} = \vecM_{(1)}+W\vecA_{(1)}+\sqrt{W}\vecV_{(1)}$. Then	
	\begin{align} 
	\nonumber \text{Cov}\left[\vecc(\fX)\right] &= \vecc(\vecMone)\vecc(\vecMone)\T + \EV[W]\vecc(\vecMone)\vecc(\vecA_{(1)})\T \\
	\nonumber &\quad + \EV[W]\vecc(\vecA_{(1)})\vecc(\vecMone)\T
 + \EV[W^2]\vecc(\vecA_{(1)})\vecc(\vecA_{(1)})\T  + \EV[W]\left(\bigotimes_{d=D}^1\matdel_d\right) \\
	\nonumber \EV\left[\vecX_{(1)}\vecX_{(1)}\T\right] &= \vecM_{(1)}\vecM_{(1)}\T + \nonumber \EV\left[W\right]\vecM_{(1)}\vecA_{(1)}\T + \EV\left[W\right]\vecA_{(1)}\vecM_{(1)}\T \\
	& \quad + \EV\left[W^2\right]\vecA_{(1)}\vecA_{(1)}\T + \EV[W]\matdel_1\times\prod_{d=2}^D\tr\left(\matdel_d\right)\\
	\nonumber \EV\left[\vecX_{(1)}\T\vecX_{(1)}\right] &= \vecM_{(1)}\T\vecM_{(1)} +  \EV\left[W\right]\vecM_{(1)}\T\vecA_{(1)} + \EV\left[W\right]\vecA_{(1)}\T\vecM_{(1)} \\
	& \quad + \EV\left[W^2\right]\vecA_{(1)}\T\vecA_{(1)} + \EV[W]\left(\bigotimes_{d=D}^2\matdel_d\right) \times  \tr\left(\matdel_1\right). 
	\end{align}
	\label{the:Expec}
\end{theorem}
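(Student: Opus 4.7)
My plan is to use iterated expectation together with the independence of $W$ and $\fV$. Since $\EV[\fV]=\mathbb{O}$, every term in the expansion of the relevant outer product that contains a lone factor of $\vecV_{(1)}$ (or $\vecc(\vecV_{(1)})$) vanishes once the inner Gaussian expectation is taken. What survives splits into (i) the deterministic-in-$\fV$ pieces built from $\vecM_{(1)}$ and $\vecA_{(1)}$, whose coefficients are $1$, $\EV[W]$, or $\EV[W^2]$ according to the total power of $W$ they carry, and (ii) a single quadratic-in-$\vecV_{(1)}$ term carrying $\EV[W]$.

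Concretely, for $\EV\bigl[\vecX_{(1)}\vecX_{(1)}\T\bigr]$ I would expand
\begin{equation*}
\vecX_{(1)}\vecX_{(1)}\T = \sum_{a,b \in \{\vecM_{(1)},\, W\vecA_{(1)},\, \sqrt{W}\vecV_{(1)}\}} a b\T,
\end{equation*}
yielding nine summands. Four vanish in expectation because each carries a single $\vecV_{(1)}$; four pass through with coefficients $1$, $\EV[W]$, or $\EV[W^2]$; the last one reduces to $\EV[W]\,\EV[\vecV_{(1)}\vecV_{(1)}\T]$. Entirely analogous expansions produce the formulas for $\EV\bigl[\vecX_{(1)}\T\vecX_{(1)}\bigr]$ and for the matrix the theorem labels $\text{Cov}[\vecc(\fX)]$ (which, since no centering is subtracted, is really the uncentered second moment of $\vecc(\fX)$).

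The only genuinely new computation is the three second-moment matrices of $\fV$. Writing $\mathbf{B} = \bigotimes_{d=D}^{2}\matdel_d^{\half}$ so that $\vecV_{(1)} = \matdel_1^{\half}\vecZ_{(1)}\mathbf{B}\T$, I would invoke the standard identity $\EV[\vecZ C \vecZ\T] = \tr(C)\,\ident$ valid for any matrix $\vecZ$ with i.i.d.\ standard Gaussian entries and any fixed symmetric $C$. Taking $C = \mathbf{B}\T\mathbf{B} = \bigotimes_{d=D}^{2}\matdel_d$ together with $\tr\bigl(\bigotimes_{d=D}^{2}\matdel_d\bigr) = \prod_{d=2}^{D}\tr(\matdel_d)$ gives $\EV[\vecV_{(1)}\vecV_{(1)}\T] = \prod_{d=2}^{D}\tr(\matdel_d)\,\matdel_1$. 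The mirror identity $\EV[\vecZ\T C \vecZ] = \tr(C)\,\ident$ with $C = \matdel_1$ delivers $\EV[\vecV_{(1)}\T\vecV_{(1)}] = \tr(\matdel_1)\bigotimes_{d=D}^{2}\matdel_d$. For the vectorized case I would avoid reworking the Kronecker bookkeeping and instead cite Theorem~\ref{the:equiv} directly to conclude $\EV[\vecc(\fV)\vecc(\fV)\T] = \bigotimes_{d=D}^{1}\matdel_d$.

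The main obstacle is indexing rather than any conceptual difficulty: one must take care that the Kronecker products appear in the precise mode-order ($d=D$ down to $1$, $d=D$ down to $2$, or $d=2$ up to $D$) compatible with the paper's matricization and vectorization conventions, so that the trace identities for $\vecZ_{(1)}$ land in the right slots of the final expression. Once this indexing is fixed, assembling the nine-term expansion with the three Gaussian second-moment formulas above yields each of the three claimed identities by term-by-term inspection.
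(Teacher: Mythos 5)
Your proposal is correct and follows essentially the same route as the paper's Appendix~B proof: expand the nine-term outer product, kill the lone-$\vecV_{(1)}$ cross terms by $\EV[\fV]=\mathbb{O}$ and independence of $W$, and reduce the remaining quadratic term to the Gaussian second moments $\EV[\vecZ C\vecZ\T]=\tr(C)\,\ident$ (which the paper derives row-by-row rather than citing) and $\EV[\vecc(\fZ)\vecc(\fZ)\T]=\ident$. Your side observation that the stated $\text{Cov}[\vecc(\fX)]$ is really the uncentered second moment is also accurate.
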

The proof of this theorem is given in Appendix B. Equivalent expressions can be found for different modes of $\fX$ by using different matricizations.

\subsection{Characteristic Functions}
The calculation of the characteristic functions for these four distributions rely on the following theorem.
\begin{theorem}
Suppose a random order-$D$ tensor $\fX$ of dimension $\vecn$ can be written in the form $\fX=\tenM+W\tenA+\sqrt{W}\fV$, where $\tenM$ and $\tenA$ are $\vecn$ dimensional order-$D$ tensors, $W\in\mathbb{R}^+$ is a positive random variable, and $\fV\sim\mathcal{N}(\mathbb{O},\bigotimes_{d=1}^D\matdel_d)$. Then 
$$
\mathcal{C}_{\fX}(\tenT)=\exp\{i\vecc(\tenT)'\vecc(\tenM)\}\int_{0}^{\infty}\exp\left\{iWa-\frac{1}{2}Wb\right\}h(w)dw,
$$
where $a=\vecc(\tenT)'\vecc(\tenA)$, $b=\vecc(\tenT)'\bigotimes_{d=1}^D\matdel_d\vecc(\tenT)$, and $h(w)$ is the probability density function of $W$.
\label{the:CF}
\end{theorem}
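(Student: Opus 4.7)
The plan is to prove this by conditioning on $W$ and applying the characteristic function of the tensor variate normal distribution given in equation \eqref{eq:CFN}. First I would start from the definition
\[
\mathcal{C}_{\fX}(\tenT) = \mathbb{E}\bigl[\exp\{i\,\vecc(\tenT)'\vecc(\fX)\}\bigr]
\]
and use iterated expectation to write this as
\[
\mathcal{C}_{\fX}(\tenT) = \mathbb{E}\Bigl[\mathbb{E}\bigl[\exp\{i\,\vecc(\tenT)'\vecc(\fX)\}\,\big|\,W\bigr]\Bigr].
\]

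Next, I would observe that conditional on $W=w$, the representation $\fX = \tenM + w\tenA + \sqrt{w}\,\fV$ is an affine transformation of the tensor normal $\fV$, so
\[
\fX \mid W=w \;\sim\; \mathcal{N}_{\vecn}\!\left(\tenM + w\tenA,\; w\bigotimes_{d=1}^D \matdel_d\right).
\]
Applying the tensor normal characteristic function from \eqref{eq:CFN} to this conditional distribution gives
\[
\mathbb{E}\bigl[\exp\{i\,\vecc(\tenT)'\vecc(\fX)\}\,\big|\,W=w\bigr] = \exp\!\left\{i\,\vecc(\tenT)'\vecc(\tenM + w\tenA) - \frac{w}{2}\vecc(\tenT)'\bigotimes_{d=1}^D\matdel_d\,\vecc(\tenT)\right\}.
\]

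Then I would factor the deterministic piece $\exp\{i\,\vecc(\tenT)'\vecc(\tenM)\}$ out of the expectation (it does not depend on $w$), use linearity of $\vecc(\cdot)$ to write $\vecc(\tenM+w\tenA)=\vecc(\tenM)+w\,\vecc(\tenA)$, and identify the scalars $a = \vecc(\tenT)'\vecc(\tenA)$ and $b = \vecc(\tenT)'\bigotimes_{d=1}^D\matdel_d\,\vecc(\tenT)$. Taking the outer expectation over $W$ with density $h(w)$ yields the stated integral representation. The argument is essentially a routine application of conditioning, and I do not anticipate any substantive obstacle; the only care needed is to verify that the affine map $\fV \mapsto \tenM + w\tenA + \sqrt{w}\,\fV$ transforms the tensor normal into another tensor normal with the claimed mean and scaled Kronecker covariance, which follows directly from the linearity of $\vecc(\cdot)$ and the multiplicative scaling of the covariance by $w$.
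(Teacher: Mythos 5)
Your proposal is correct and follows essentially the same route as the paper's proof: iterated expectation, the conditional tensor normality $\fX\mid W=w\sim\mathcal{N}_{\vecn}(\tenM+w\tenA,\,w\bigotimes_{d=1}^D\matdel_d)$, application of the tensor normal characteristic function from \eqref{eq:CFN}, and integration against the density $h(w)$. No gaps.
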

\begin{proof}
For the purposes of this proof, let $\vect=\vecc(\tenT), \vecmu=\vecc(\tenM),$ and $\vecalpha=\vecc(\tenA)$. First note that because of the formulation of $\fX$, we have that $\fX|W\sim\mathcal{N}_{\vecn}(\tenM+W\tenA,W\bigotimes_{d=1}^D\matdel_d)$. Using iterative expectation, along with \eqref{eq:CFN}, we then have
\begin{align*}
\mathcal{C}_{\fX}(\tenT)&=\mathbb{E}[\exp\{i\vect'\vecc(\fX)\}]
=\mathbb{E}[\mathbb{E}[\exp\{i\vect'\vecc(\fX)\}|W]]\\
&=\exp\{i\vect'\vecmu\}\mathbb{E}\left[\exp\left\{i\vect'\vecalpha w-\frac{1}{2}w\vect'\bigotimes_{d=1}^D\matdel_d\vect\right\}\right]\\
&=\exp\{i\vect'\vecmu\}\int_0^{\infty}\exp\left\{iwa-\frac{1}{2}wb\right\}h(w)dw,
\end{align*}
where $a=\vect'\vecalpha$, $b=\vect'\bigotimes_{d=1}^D\matdel_d\vect$, and $h(w)$ is the density function of $W$.
\end{proof}
Fortunately the integral in Theorem \ref{the:CF}, can be found in closed form for each of the four distributions considered herein, and the characteristic functions are displayed below. The full derivation is shown in Appendix C. Note that we use the same notation as in the proof of Theorem \ref{the:CF}, and $a$ and $b$ are as defined in the theorem.

If $\fX$ follows a tensor variate skew-$t$ distribution with $\nu$ degrees of freedom, then from Theorem~\ref{the:CF} the characteristic function is
$$
\mathcal{C}_{\fX}(\tenT)=\exp\{i\vect'\vecmu\}\frac{2\frac{\nu}{2}^{\frac{\nu}{2}}K_{-\frac{\nu}{2}}(\sqrt{\nu b})
}{\Gamma\left(\frac{\nu}{2}\right)\left(\frac{b}{\nu}\right)^{-\frac{\nu}{4}}}\mathcal{C}_{\text{GIG}}\left(a~|~b,\nu,-\frac{\nu}{2}\right).
$$
If $\fX$ follows a tensor variate variance gamma distribution with concentration parameter $\omega$ and index parameter $\lambda$, then from Theorem \ref{the:CF} the characteristic function is
$$
\mathcal{C}_{\fX}(\tenT)=\exp\{i\vect'\vecmu\}\frac{K_{\lambda}(\sqrt{(\omega+b)\omega})}{K_{\lambda}(\omega)\left(\frac{\omega+b}{\omega}\right)^{\frac{\lambda}{2}}}\mathcal{C}_{\text{GIG}}(a~|~\omega+b,\omega,\lambda).
$$
If $\fX$ follows a tensor variate variance gamma distribution with concentration parameter $\gamma$, then from Theorem \ref{the:CF} the characteristic function is
$$
\mathcal{C}_{\fX}(\tenT)=\exp\{i\vect'\vecmu\}\frac{\gamma^{\gamma}}{\left(\gamma+\frac{1}{2}b\right)^{\gamma}}\mathcal{C}_{\text{Gamma}}\left(a~\middle|~\gamma,\gamma+\frac{1}{2}b\right),\\
$$
where 
$$
\mathcal{C}_{\text{Gamma}}\left(a~\middle|~\gamma,\gamma+\frac{1}{2}b\right)=\left(1-\frac{2ia}{2\gamma+b}\right)^{-\gamma}
$$
is the characteristic function of a gamma distribution with parameters $\gamma$ and $\gamma+b/2$ evaluated at~$a$. 

We take this time to note that another skewed tensor distribution is easily obtained from the TVVG distribution. In the multivariate case, if $W\sim \text{Exp}(1)$, where $\text{Exp}(\cdot)$ represents the exponential distribution with rate $\lambda$, then this results in the shifted asymmetric Laplace (SAL) distribution \citep{franczak14}. Therefore, due to the close relationship between the tensor and multivariate distributions, the tensor variate SAL (TVSAL) would naturally arise as a special case of the TVVG with $\gamma=1$. This can also be viewed as the TVVG distribution without the ability to model concentration.

If $\fX$ follows a tensor variate variance gamma distribution with concentration parameter $\kappa$, then from Theorem \ref{the:CF} the characteristic function is
$$
\mathcal{C}_{\fX}(\tenT)=\exp\{i\vect'\vecmu\}\int_0^{\infty}\exp\left\{iwa-\frac{1}{2}wb\right\}\frac{\exp\{\kappa\}w^{-\frac{2}{3}}}{\sqrt{2\pi}}\exp\left\{-\frac{1}{2}\left(\kappa^2w+\frac{1}{w}\right)\right\}dw,\\
$$
where
$$
\mathcal{C}_{\text{IG}}(t~|~1,\gamma)=\exp\left\{\gamma\left(1-\sqrt{1-\frac{2it}{\gamma^2}}\right)\right\}
$$
is the characteristic function of the $\text{IG}(1,\gamma)$ distribution.

\subsection{Benefits Over Vectorization}
Just like in the matrix variate case, the tensor observations can be vectorized and then analyzed as a vector; however, there are a few drawbacks to using this method. The first is that the scale matrices $\matdel_d$ allow for the modelling of element dependencies within each mode of the tensor. 

Secondly, the number of free scale parameters is significantly reduced. If we consider an order-$D$ tensor of dimension ${\bf n}$, then the result is an $n^*$ dimensional vector. If no restraints were placed on the scale matrix when modelling the vectorized version, then there would be $n^*(n^*+1)/2$ free scale parameters that would need to be estimated. There are, of course, constraints that could be placed on the scale matrix such as considering the eigenvalue decomposition, or implementing a factor analysis; however, even these methods would fail to provide adequate results when the dimension surpasses even 100, which is easily obtained with even low dimensions in each mode such as a $5 \times5 \times 5$ order-3 tensor. By modelling with one of the proposed tensor variate distributions, parameter estimation of the scale parameters is restricted to estimating $D$ lower dimensional scale matrices leading to $\sum_{d=1}^Dn_d(n_d+1)/2$ free scale parameters. Therefore, in the previous case of a $5\times 5\times 5$ order-3 tensor, there would be only 45 scale parameters when using a tensor distribution in comparison to 7875 scale parameters in an unconstrained scale matrix when vectorizing.

\section{Parameter Estimation}
Parameter estimation can proceed in one of two ways. Moreover, both of these are based on an expectation conditional maximization \citep[ECM;][]{meng93} algorithm. The first is founded on the flip-flop algorithm based on the algorithm proposed by \cite{manceur13}. Suppose we observe a sample of $N$ tensors $\mathcal{X}=(\tenX_1, \tenX_2, \ldots, \tenX_N)$ from one of the four skewed, $\vecn$ dimensional, order-$D$ tensor variate distributions. We proceed as if the observed data is incomplete, and introduce the latent variables $w_i$. 

The complete log likelihood is then
\begin{align}
\nonumber \ell_{C}(\bvtheta| \mathcal{X},\vecw)&= C-\sum_{d=1}^D\frac{Nn^*}{2n_d}\log(|\matdel_d|)+h(w~|~\btheta)\\ 
\nonumber & +\frac{1}{2}\sum_{i=1}^N \vecc(\tenX_i-\tenM)\T\bigotimes_{d=1}^D\matdel_{d}^{-1}\vecc(\tenA)+\frac{1}{2}\sum_{i=1}^N \vecc(\tenA)\T\bigotimes_{d=1}^D\matdel_{d}^{-1}\vecc(\tenX_i-\tenM)
\\
& -\frac{1}{2}\sum_{i=1}^N\frac{1}{w_i}\left(\vecc(\tenX_i-\tenM)\T\bigotimes_{d=1}^D\matdel_{d}^{-1}\vecc(\tenX_i-\tenM)\right)-\frac{1}{2}\sum_{i=1}^Nw_i\vecc(\tenA)\T\bigotimes_{d=1}^D\matdel_{d}^{-1}\vecc(\tenA) \label{eq:ll},
\end{align}
where $C$ is a constant that does not depend on the parameters, and $h(w~|~\bvtheta)$, is the density of $W$ and is dependent on the tensor variate distribution of interest.

We proceed by using an ECM algorithm described below.

{\bf 1) Initialization}: Initialize the parameters $\tenM,\tenA,\vecDelta_j\text{'s}, \bvtheta$.

{\bf 2) E Step}: Update $a_i, b_i, c_i$, where
\begin{equation*}\begin{split}
a_{i}&=\mathbb{E}(W_{i}~|~\tenX_i,\hat{\bvtheta}),\quad
b_{i}=\mathbb{E}\left(\frac{1}{W_{i}}~\bigg|~\tenX_i,\hat{\bvtheta}\right),\quad
c_{i}=\mathbb{E}(\log W_{i}~|~\tenX_i,\hat{\bvtheta}).\\
\end{split}\end{equation*}
As usual, all expectations are conditional on current parameter estimates; however, to avoid cluttered notation, we do not use iteration-specific notation herein. Although these expectations are dependent on the distribution in question, it can be shown that in each case, the conditional distributions follow a GIG distribution and, therefore, these expectations can be calculated in a mathematically tractable form. The exact distributions and expectations are given in Appendix D. \\
{\bf 3) First CM Step}: Update the parameters $\tenM,\tenA,\bvtheta$. 
\begin{align*}
\hat{\tenM}^{(t+1)}&=\frac{\sum_{i=1}^N\tenX_i\left(\overline{a}^{(t+1)}b^{(t+1)}_i-1\right)}{\sum_{i=1}^N\overline{a}b_i^{(t+1)}-N} \numberthis \label{eq:muup}\\
\hat{\tenA}^{(t+1)}&=\frac{\sum_{i=1}^N\tenX_i\left(\overline{b}-b^{(t+1)}_i\right)}{\sum_{i=1}^Na_i^{(t+1)}\overline{b}-N} \numberthis \label{eq:Aup}
\end{align*}

The updates for the additional parameters, $\bvtheta$, are equivalent to the single component updates for the matrix variate counterparts found in \cite{gallaugher18a}. The exact updates for each distribution are presented in Appendix E.

{\bf 4) Additional CM Steps}: Update $\matdel_j$
\begin{equation}\label{eq:updateDelta}
\begin{split}
\hat{\matdel}_j^{(t+1)}
&=\frac{n_d}{Nn^*}\sum_{i=1}^N b^{(t+1)}_i\left(\vecX{_i}_{(j)}-\hat{\matm}_{(j)}^{(t+1)}\right)\bigotimes_{\substack{d\ne j}}\hat{\matdel}_d^{{-1}}\left(\vecX{_i}_{(j)}-\hat{\matm}_{(j)}^{(t+1)}\right)\T \\ 
&-\hat{\vecA}_{(j)}^{(t+1)}\bigotimes_{\substack{d\ne j}}\hat{\matdel}_d^{{-1}}\left(\vecX{_i}_{(j)}-\hat{\matm}_{(j)}^{(t+1)}\right)\T
-\left(\vecX{_i}_{(j)}-\hat{\matm}_{(j)}^{(t+1)}\right)\bigotimes_{\substack{d\ne j}}\hat{\matdel}_d^{{-1}}(\hat{\vecA}_{(j)}^{(t+1)})\T\\
&+a_i^{(t+1)}\hat{\vecA}_{(j)}^{(t+1)}\bigotimes_{\substack{d\ne j}}\hat{\matdel}_d^{{-1}}(\hat{\vecA}_{(j)}^{(t+1)})\T
\end{split}
\end{equation}

{\bf 5) Check Convergence}: If not converged repeat steps 2--5 until convergence.

It is clear that this proposed algorithm is mathematically tractable; however, the flip-flop method for updating the scale matrices is computationally intensive, and possibly infeasible. The possibility of being infeasible arises because the updates for the scale matrices requires matricization along each of the $D$ modes. Therefore, we propose another method for Step 4 of the algorithm that only requires the matricization of each tensor observation along the first mode.

The quadratic forms in \eqref{eq:ll} can be re-expressed as traces of the mode one tensor matricizations and thus the complete log likelihood can be expressed two ways. The first is
\begin{align}
\nonumber \ell_{C}(\bvtheta| \mathcal{X},\vecw) &= C-\sum_{d=1}^D\frac{Nn^{*}}{2n_d}\log(|\matdel_d|)+h(w~|~\bvtheta)\\ 
\nonumber & +\frac{1}{2}\sum_{i=1}^N\frac{1}{w_i} \sum_{j=1}^{n^{*}_{3:D}} \tr\left[\vecDelta_1\inv \vecX_{(1)ij}\T \vecDelta_2\inv \vecX_{(1)ij}\right]+\frac{1}{2}\sum_{i=1}^N\sum_{j=1}^{n^{*}_{3:D}} \tr\left[\vecDelta_1\inv \vecA_{(1)j}\T \vecDelta_2\inv \vecX_{(1)ij}\right]
\\
& -\frac{1}{2}\sum_{i=1}^N\sum_{j=1}^{n^{*}_{3:D}} \tr\left[\vecDelta_1\inv \vecX_{(1)ij}\T \vecDelta_2\inv \vecA_{(1)j}\right]-\frac{1}{2}\sum_{i=1}^Nw_i \sum_{j=1}^{n^{*}_{3:D}} \tr\left[\vecDelta_1\inv \vecA_{(1)j}\T\vecDelta_2\inv \vecA_{(1)j} \right], \label{eq:ll2}
\end{align}
where $n^{*}_{3:D} = \prod_{d=3}^{D} n_d$, $\vecX_{(1)ij} = (\ident_{n_2} \otimes \vecejT \bigotimes_{d=3}^D \cholDeltaI{d} )\breve{\vecX}_{(1)i}$, $\breve{\vecX}_{(1)i} = \vecX_{(1)i} - \vecMone$, $\vecA_{(1)j} = (\ident_{n_2} \otimes \vecejT \bigotimes_{d=3}^D \cholDeltaI{d} )\vecA_{(1)}$, $\cholDeltaI{d}$ is the Cholesky decomposition of $\vecDelta_d\inv$, and $\vecej$ is a kronecker product of unit basis vectors.

An alternative form for \eqref{eq:ll} can be derived using the tensor commutative operator, defined in \cite{ohlson2013}, to permute the rows of the tensor matricizations and the entries of the kronecker products. It exchanges the second and $l^{th}$ elements in the sequence, where $3 \leq l \leq D$. We denote these modifications by the superscript $l2$. The second expression for the complete log-likelihood is defined thusly
\begin{align}
\nonumber \ell_{C}(\bvtheta| \mathcal{X},\vecw) &= C-\sum_{d=1}^D\frac{Nn^{*}}{2n_d}\log(|\matdel_d|)+h(w~|~\bvtheta)\\ 
\nonumber & +\frac{1}{2}\sum_{i=1}^N\frac{1}{w_i} \sum_{j=1}^{n^{*}_{2:D/l}} \tr\left[\vecDelta_1\inv (\vecX_{(1)ij}^{l2})\T \vecDelta_l\inv \vecX_{(1)ij}^{l2}\right]+\frac{1}{2}\sum_{i=1}^N\sum_{j=1}^{n^{*}_{2:D/l}} \tr\left[\vecDelta_1\inv (\vecA_{(1)j}^{l2})\T \vecDelta_l\inv \vecX_{(1)ij}^{l2}\right]
\\
& -\frac{1}{2}\sum_{i=1}^N\sum_{j=1}^{n^{*}_{2:D/l}} \tr\left[\vecDelta_1\inv (\vecX_{(1)ij}^{l2})\T \vecDelta_l\inv \vecA_{(1)j}^{l2}\right]-\frac{1}{2}\sum_{i=1}^Nw_i \sum_{j=1}^{n^{*}_{2:D/l}} \tr\left[\vecDelta_1\inv (\vecA_{(1)j}^{l2})\T\vecDelta_l\inv \vecA_{(1)j}^{l2} \right], \label{eq:ll3}
\end{align}
where $n^{*}_{2:D/l} = \prod_{\substack{j=2\\j\ne l}}^D n_j$, $\vecX_{(1)ij}^{l2} = (\ident_{n_l} \otimes \vecejT \bigotimes_{\substack{j=2\\j\ne l}}^D \cholDeltaI{d} )\breve{\vecX}_{(1)i}^{l2}$ and $\vecA_{(1)j}^{l2} = (\ident_{n_l} \otimes \vecejT \bigotimes_{\substack{j=2\\j\ne l}}^D \cholDeltaI{d} )\vecA_{(1)}^{l2} $.

The ECM algorithm described above is modified in step four by replacing  \eqref{eq:updateDelta} with the following three equations, one for $\hat{\vecDelta}_1$, $\hat{\vecDelta}_2$ and the general update for $\hat{\vecDelta}_l$:
\begin{align}
\nonumber \hat{\vecDelta}_1^{(t+1)} =& \frac{n_1}{Nn^{*}}\sum_{i=1}^N \sum_{j=1}^{n^{*}_{3:D}} -\hat{\vecA}_{(1)j}\T\hat{\vecDelta}_2\inv\vecX_{(1)ij} - \vecX_{(1)ij}\T\hat{\vecDelta}_2\inv\hat{\vecA}_{(1)j} \\
 &+b^{(t+1)}_i\vecX_{(1)ij}\T\hat{\vecDelta}_2\inv\vecX_{(1)ij} + a^{(t+1)}_i\hat{\vecA}_{(1)j}\T\hat{\vecDelta}_2\inv\hat{\vecA}_{(1)j},\\
\nonumber \hat{\vecDelta}_2^{(t+1)} =& \frac{n_2}{Nn^{*}}\sum_{i=1}^N \sum_{j=1}^{n^{*}_{3:D}} -\vecX_{(1)ij}\hat{\vecDelta}_1\inv\hat{\vecA}_{(1)j}\T - \hat{\vecA}_{(1)j}\hat{\vecDelta}_1\inv\vecX_{(1)ij}\T\\
&+b^{(t+1)}_i\vecX_{(1)ij}\hat{\vecDelta}_1\inv\vecX_{(1)ij}\T + a^{(t+1)}_i\hat{\vecA}_{(1)j}\hat{\vecDelta}_1\inv\hat{\vecA}_{(1)j}\T,\\
\nonumber \hat{\vecDelta}_l^{(t+1)} =& \frac{n_l}{Nn^{*}}\sum_{i=1}^N \sum_{j=1}^{n^{*}_{2:D/l}} -\vecX_{(1)ij}^{l2}\hat{\vecDelta}_1\inv\left(\hat{\vecA}_{(1)j}^{l2}\right)\T - \hat{\vecA}_{(1)j}^{l2}\hat{\vecDelta}_1\inv(\vecX_{(1)ij}^{l2})\T \\
&+b^{(t+1)}_i\vecX_{(1)ij}^{l2}\hat{\vecDelta}_1\inv\left(\vecX_{(1)ij}^{l2}\right)\T + a^{(t+1)}_i\hat{\vecA}_{(1)j}^{l2}\hat{\vecDelta}_1\inv\left(\hat{\vecA}_{(1)j}^{l2}\right)\T.   
\end{align}

The ECM algorithm is implemented in version 1.5.3 of the Julia programming language \citep{bezanson2017}. Further details are available in Appendix F.

\subsection{Identifiability}

As was discussed in \cite{dutilleul99}, \cite{Anderlucci15} and \cite{gallaugher18a} for parameter estimation in the matrix variate case, and in \cite{tait2020} for the order-$D$ case, the estimates of $\vecDelta_d$ are unique only up to a multiplicative constant. Indeed, if we let $d_k = 1/\delta_{k,(1,1)}$, where $\delta_{k,(1,1)}$ is the first entry in $\vecDelta_d$ then
\begin{equation}\label{eq:identif}
\bigotimes_{d=1}^D\vecDelta_d = \frac{1}{\prod_{k=2}^{D}d_k}\vecDelta_1 \otimes \bigotimes_{k=2}^D d_k\vecDelta_k,  
\end{equation}
and therefore the likelihood is unchanged. However, we notice that $\bigotimes_{d=1}^D\vecDelta_d = \bigotimes_{d=1}^D\tilde{\vecDelta}_d$, where $\tilde{\vecDelta}_d$ are the terms on the right-hand side of \eqref{eq:identif} so the estimate of the Kronecker product would be unique. 

There are several options for solving this problem of non identifiability. One possible solution is to set the first diagonal element of $\matdel_d$ for $1\le d\le D-1$, as generalization of the method used by \cite{gallaugher18a}. Another is to set $\tr(\matdel_d)=n_d$ again for $1\le d\le D-1$. This leads to a total of 
$$
\sum_{d=1}^D \frac{n_d(n_d+1)}{2}-D+1
$$
free parameters from the scale matrices.

\section{Examples}

\subsection{Simulation study}
We conduct a simulation study to investigate the effect of different sample and tensor sizes to investigate if we can effectively estimate the model parameters. 
The simulations are conducted using order-3 tensors. We consider sample sizes $N\in\{50, 100, 150\}$. The $n^*$ quantity is used to measure the different dimensions of the tensors. Its values include 512, 729, 1331, 2197, 3375 and 4813. While these values of $n^*$ can equate to any product of dimension lengths, we consider equal dimension lengths of 8, 9, 11, 13, 15 and 17, respectively, for order-3 tensors. For each combination of $N$ and $n^*$, 100 datasets are simulated. We compare the ECM algorithm for the four skewed tensor variate distributions to the flip-flop algorithm for the tensor variate normal distribution described in \cite{manceur13}. 

We use the relative error to determine how close the estimated model parameters are to the true parameters. It is defined as $\frac{\norm{\hat{\vecV} - \vecV}_F}{\norm{\vecV}_F}$, where $\norm{\cdot}_F$ is the Frobenius matrix norm, $\hat{\vecV}$ is the estimated parameter value, and $\vecV$ is the true parameter value used to generate the simulated data. The smaller this ratio is, the less error is present in the model's parameter estimates.

\subsubsection{Normal Data}
We first consider how these proposed distributions perform when fitted to TVN data.
The ECM and flip-flop algorithms all converge in three iterations. Figure \ref{fig:sim_norm_m} visualizes the mean and 95\% confidence intervals for the relative error in $\mathbb{E}[\fX]$ across the values of $N$ and $n^*$. As expected, the TVN estimates $\tenM$ well. The TVGH and TVNIG have nearly identical performance, which does not degrade as $N$ and $n^*$ increase in size. Moreover, the performance is fairly similar to the TVN. The other three tensor variate models do a poor job of estimating $\mathbb{E}[\fX]$, but the performance improves as the sample size increases. Moreover, the performance is better for lower values of $n^*$.  
\begin{figure}[!htb]
	\begin{center}
		\includegraphics[height=0.5\textwidth]{./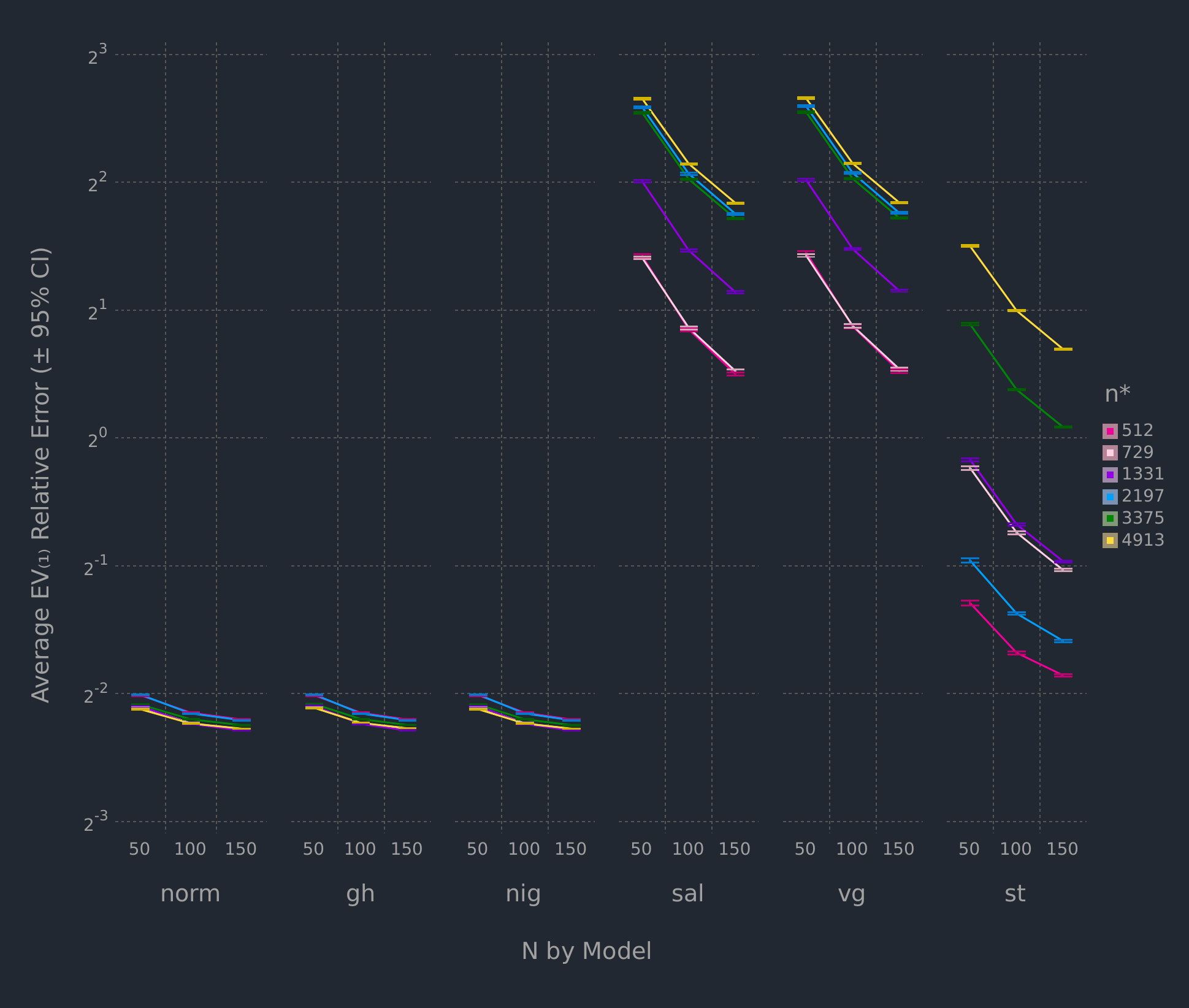}
		\caption{Average and 95\% confidence intervals for the relative error in the mode-1 matricization of $\mathbb{E}[\fX]$ for the simulation on the normal data.}
		\label{fig:sim_norm_m}
	\end{center}
\end{figure}  

A different picture emerges when we look at the relative error in $\bigotimes_{d=1}^D\vecDelta_d$, visualized in Figure~\ref{fig:sim_norm_kp}. 
The TVN performs the worst of all the distributions considered. Moreover, the performance of the skewed distributions does not seem to be greatly affected by the dimension. Finally, the performance does not appear to change for the different sample sizes considered. Further details and results can be found in Appendix G.
\begin{figure}[!htb]
	\begin{center}
		\includegraphics[height=0.5\textwidth]{./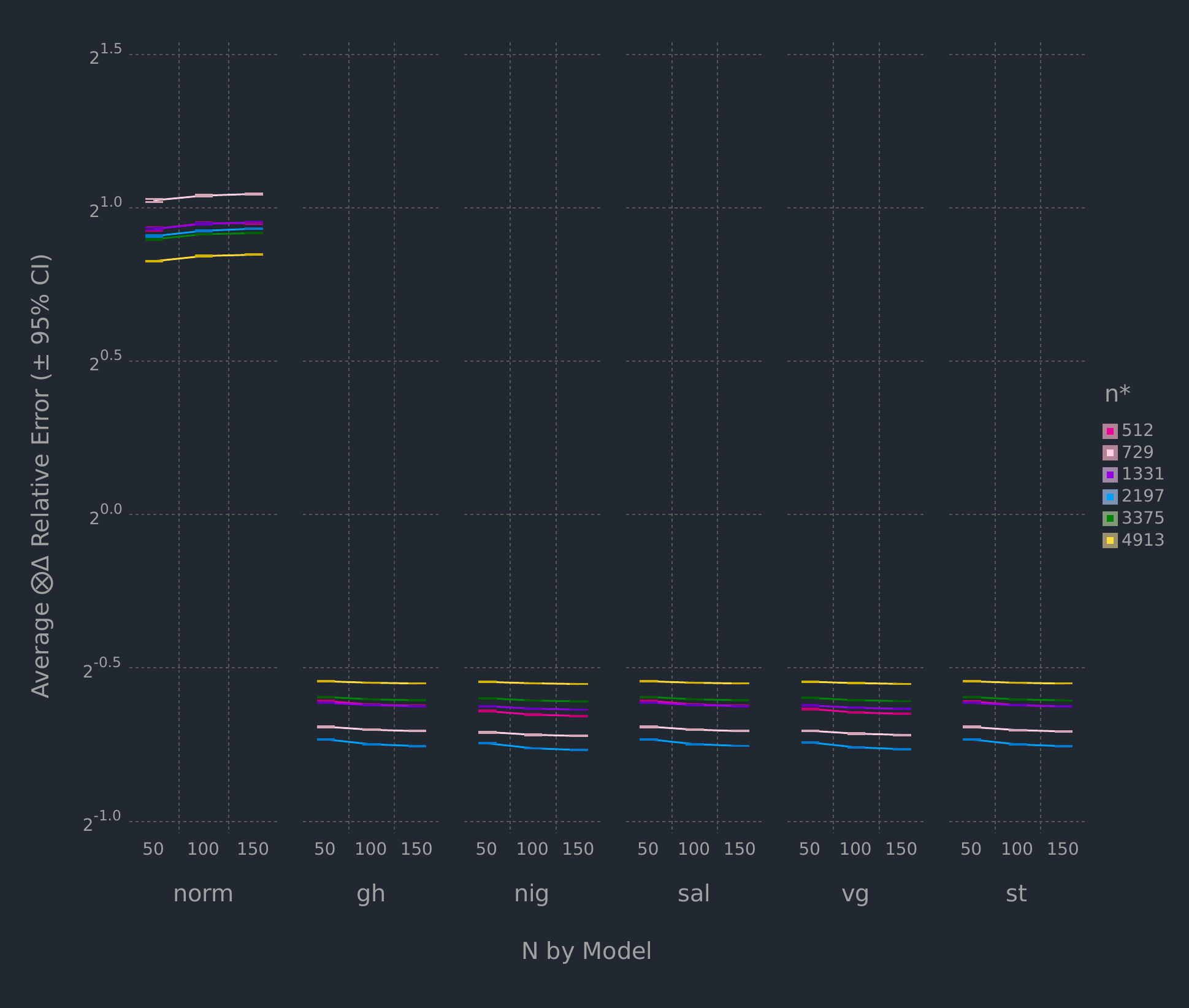}
		\caption{Average and 95\% confidence intervals for the relative error in $\bigotimes_{d=1}^D\vecDelta_d$ for the simulation on the normal data.}
		\label{fig:sim_norm_kp}
	\end{center}
\end{figure}  

\subsubsection{Skewed Data}
We now consider simulations involving skewed data. We used \eqref{eq:VMM} to generate data from a TVST distribution with $\nu = 4$. The different models had a lot of variation in the number of iterations they took to converge to a solution. Typically the normal distribution converges in a median of 3 iterations and the skewed distributions converge in a median of 4--6 iterations.
\begin{figure}[!htb]
	\begin{center}
		\includegraphics[height=0.5\textwidth]{./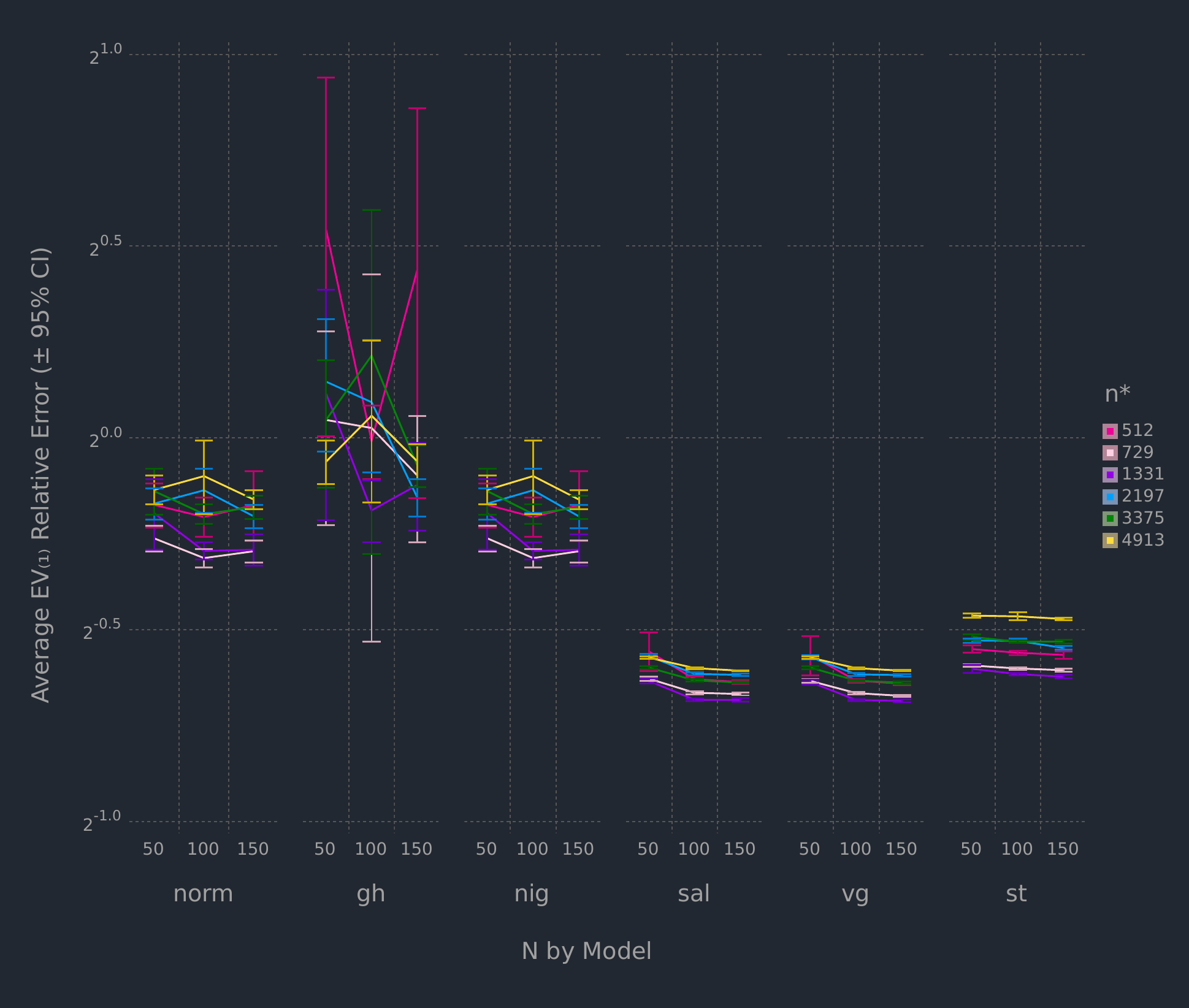}
		\caption{Average and 95\% confidence intervals for the relative error in the mode 1 matricization of $\mathbb{E}[\fX]$ for the simulation of the skewed data.}
		\label{fig:sim_skew_m}
	\end{center}
\end{figure}  

Figure \ref{fig:sim_skew_m} visualizes the mean and 95\% confidence intervals for the relative error in $\mathbb{E}[\fX]$ for the 100 datasets across the values of $N$ and $n^*$.  The TVST, TVVG and TVSAL models perform quite well, accurately estimating $\mathbb{E}[\fX]$ in all scenarios. The TVN and TVNIG distributions perform a little worse, but the estimates are still fairly good. When looking at the TVGH results, however, they are highly variable. This can potentially be explained by the estimated values of $\omega$ and $\lambda$. The resulting GIG distribution (the distribution of the latent variables $W_i$) highly deviate from the $ \text{Inv-Gamma}\left(\frac{\nu}{2},\frac{\nu}{2}\right)$ distribution used to generate the data.  

\begin{figure}[!htb]
	\begin{center}
		\includegraphics[height=0.5\textwidth]{./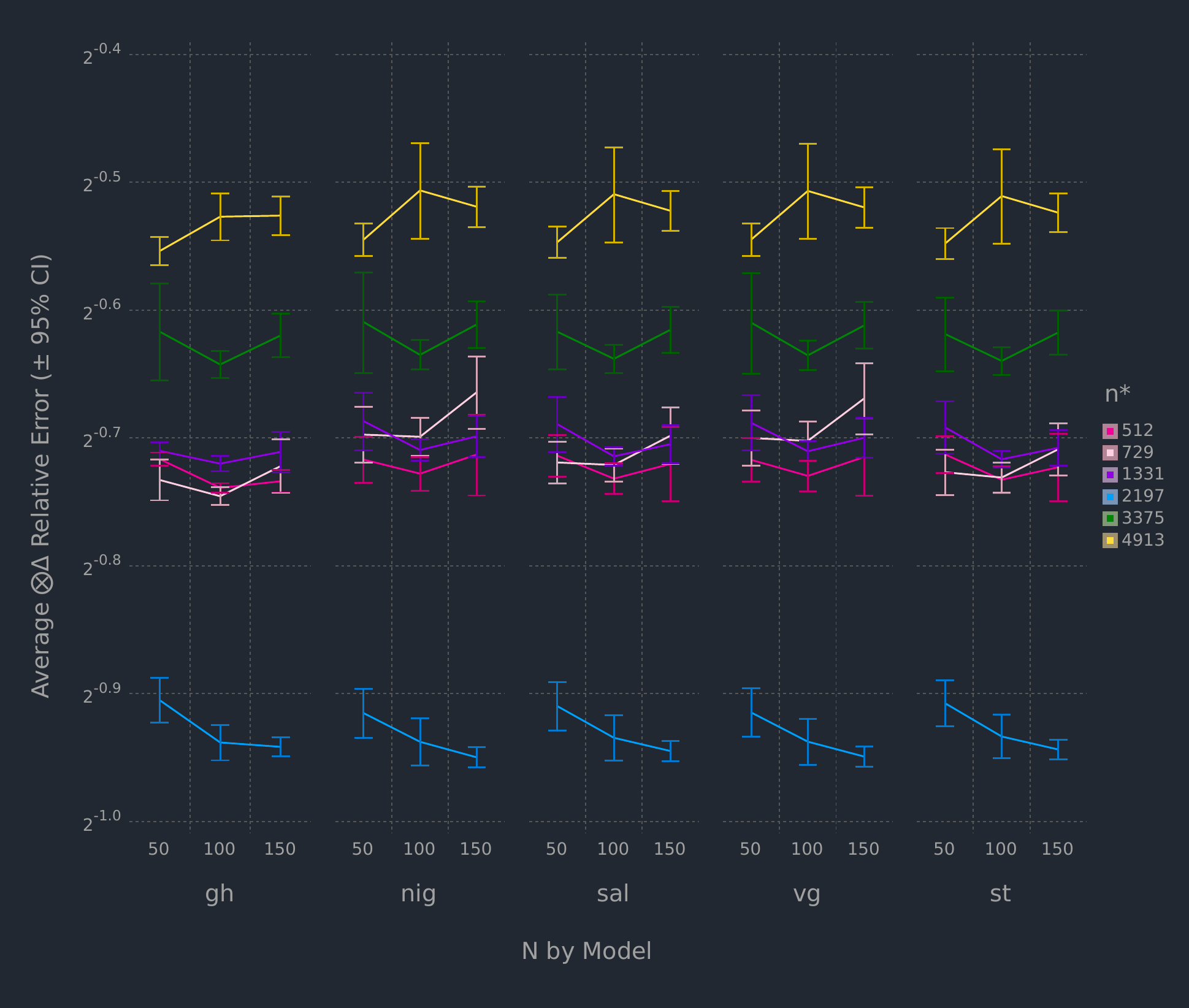}
		\caption{Average and 95\% confidence intervals for the relative error in $\bigotimes_{d=1}^D\vecDelta_d$ for the simulation on the skewed data.}
		\label{fig:sim_skew_kp}
	\end{center}
\end{figure}  

Figure \ref{fig:sim_skew_kp} indicates all the skewed tensor variate models do an excellent job of estimating $\bigotimes_{d=1}^D\vecDelta_d$ across the range of $N$ and $n^*$ values. Their performance degrades only slightly for the two largest values of $n^*$. The TVN has a median relative error of nearly 32. Further details and results for this simulation are presented in Appendix G.

\subsection{Image analysis}

We now consider an analysis of red-green-blue (RGB) images. These images come in the form three colour intensity matrices (red, green, and blue) ``stacked" on top of each other, thus creating an order-3 tensor.
The images come from the CIFAR-100 data set \citep{cifar100}. We choose images of maple trees that had green or yellow leaves and came from the following CIFAR-100 class hierarchy: superclass trees $\rightarrow$ class maple. These tensors had an $n^* = 3072$, making them comparable to the $n^*=3375$ results in our simulation. 
Figure \ref{fig:img_raw} is an example of one of the images in our sample of 207 tensors. 
\begin{figure}[!htb]
	\begin{center}
		\includegraphics[height=0.25\textwidth]{./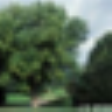}
		\caption{An image of a maple tree from the CIFAR-100 data set.}
		\label{fig:img_raw}
	\end{center}
\end{figure}  

The BIC is used to select the best model for this data, and the skewed models all outperform the normal model (Figure~\ref{fig:img_bic}). The TVNIG model obtains the best performance, with a $\text{BIC} = 2.978\times 10^6$.

\begin{figure}[!htb]
	\begin{center}
		\includegraphics[height=0.4\textwidth]{./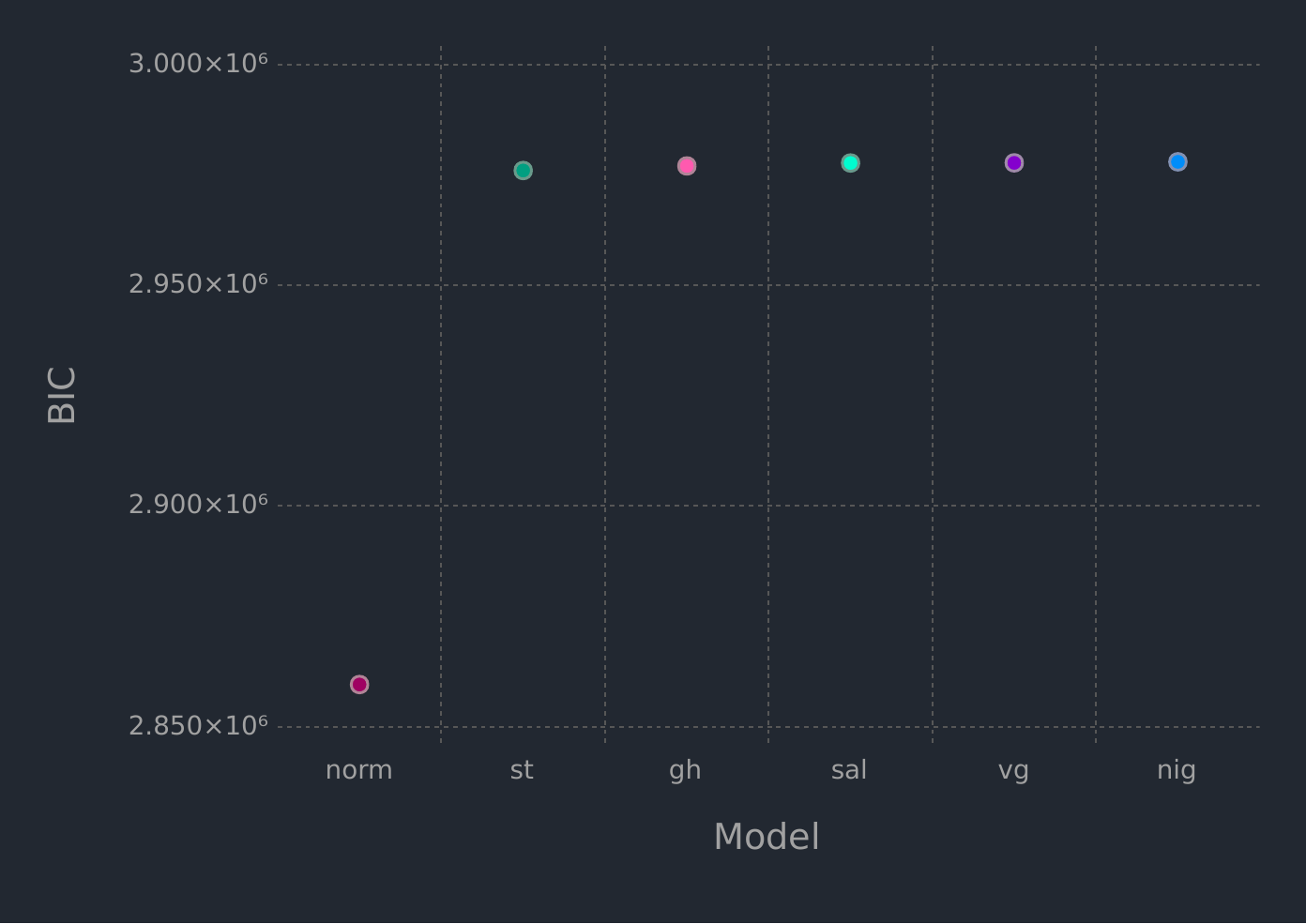}
		\caption{BIC results from the image analysis.}
		\label{fig:img_bic}
	\end{center}
\end{figure}  

The model parameters of the TVNIG model are visualized in Figures \ref{fig:img_mean} to \ref{fig:img_scale}. Figure \ref{fig:img_mean} shows the coloured image that results from the estimated  $\mathbb{E}[\fX]$ tensor. The sky, tree trunk and branches are clearly visible. In Figure \ref{fig:img_mean_slice} we visualize each slice of the estimated $\mathbb{E}[\fX]$ tensor. The sky and tree trunk remain clearly distinguishable for each slice. It is interesting to note that the intensity for the leaves of the tree appear to be slightly higher in the green and red slices compared to the blue slice of the tensor. Moreover, the sky portion of the images show slightly higher intensities in the blue slice.
\begin{figure}[!htb]
	\begin{center}
		\includegraphics[height=0.4\textwidth]{./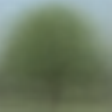}
		\caption{An image of the $\mathbb{E}[\fX]$ tensor from the NIG model.} 
		\label{fig:img_mean}
	\end{center}
\end{figure}  
\begin{figure}[!htb]
	\begin{center}
		\includegraphics[height=0.4\textwidth]{./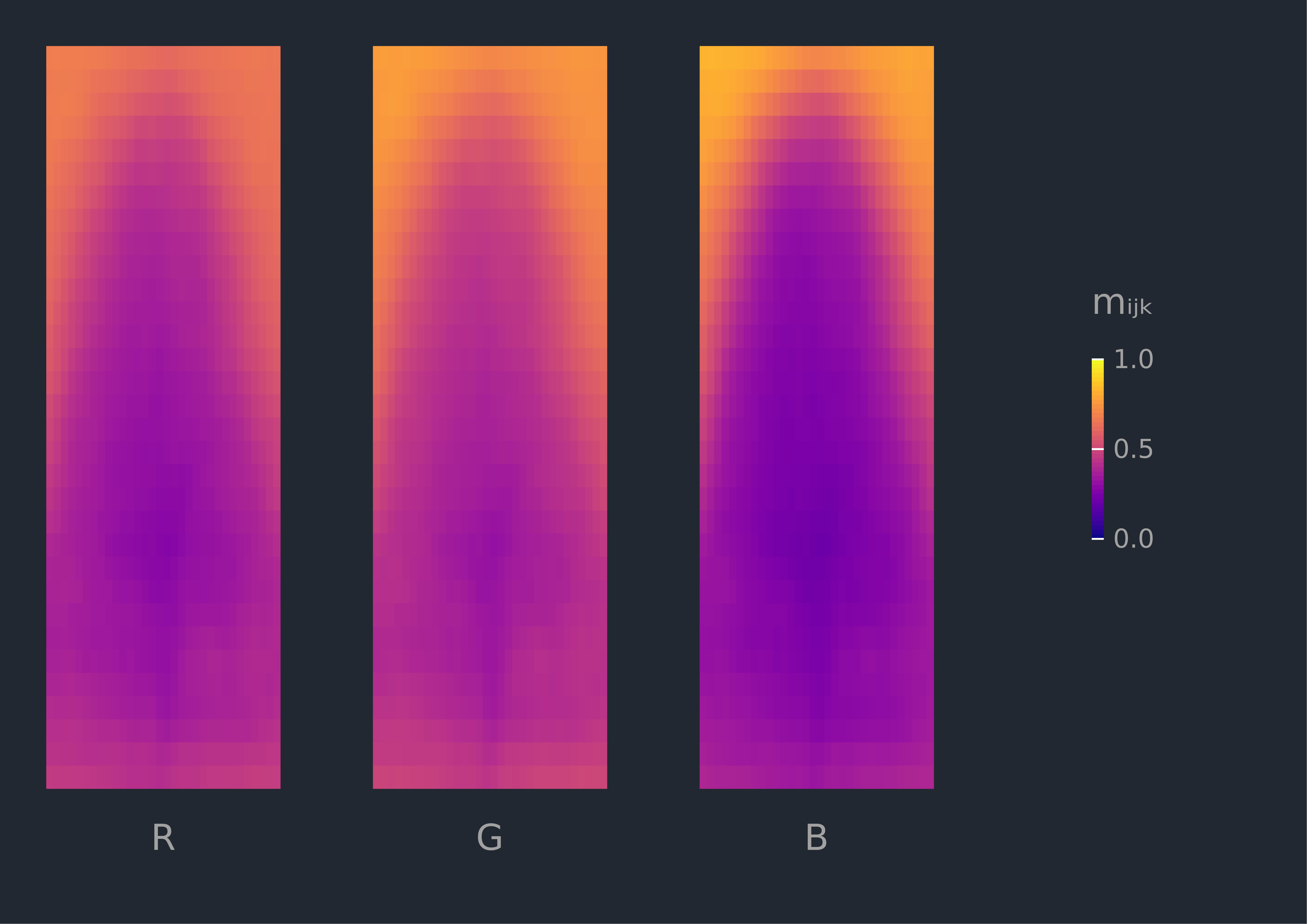}
		\caption{Slices of the $\mathbb{E}[\fX]$ tensor from the NIG model.}
		\label{fig:img_mean_slice}
	\end{center}
\end{figure}  

The estimated skewness tensor slices of $\tenA$ are visualized in Figure \ref{fig:img_skew_slice}. It is clear that each colour slice has different skewness patterns. The sky tends to have the lowest skewness, a pattern accentuated in the ``B" slice. The ``R" slice has the most positive skewness, concentrated in the trunk and body of the trees. 
\begin{figure}[!htb]
	\begin{center}
		\includegraphics[height=0.4\textwidth]{./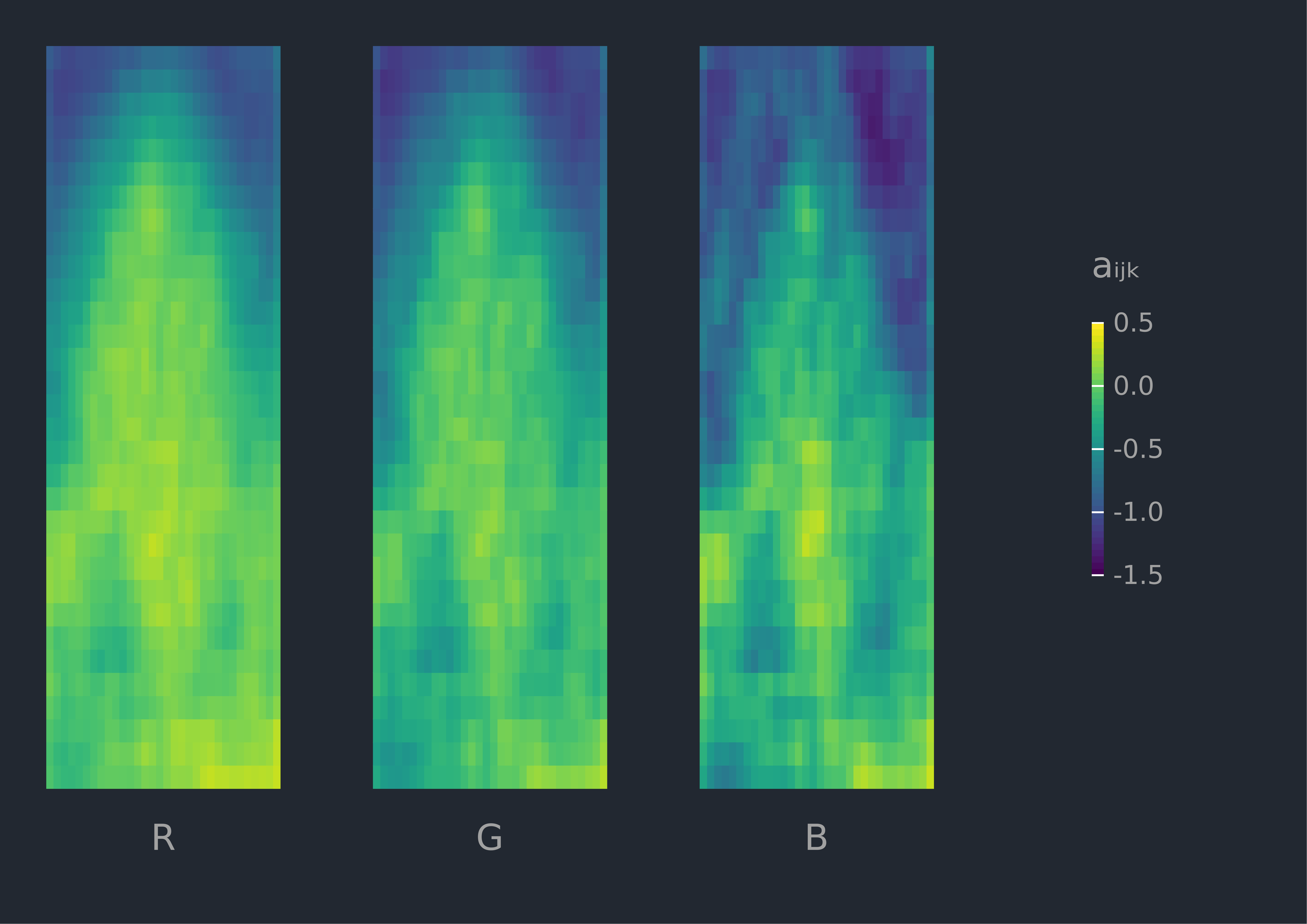}
		\caption{Slices of the $\tenA$ tensor from the NIG model.}
		\label{fig:img_skew_slice}
	\end{center}
\end{figure} 

The estimated variability in each of the three modes is visualized in Figure~\ref{fig:img_scale}, where each scale matrix $\matdel_d$ is visualized as a heatmap. 
\begin{figure}[!htb]
	\begin{center}
		\includegraphics[height=0.4\textwidth]{./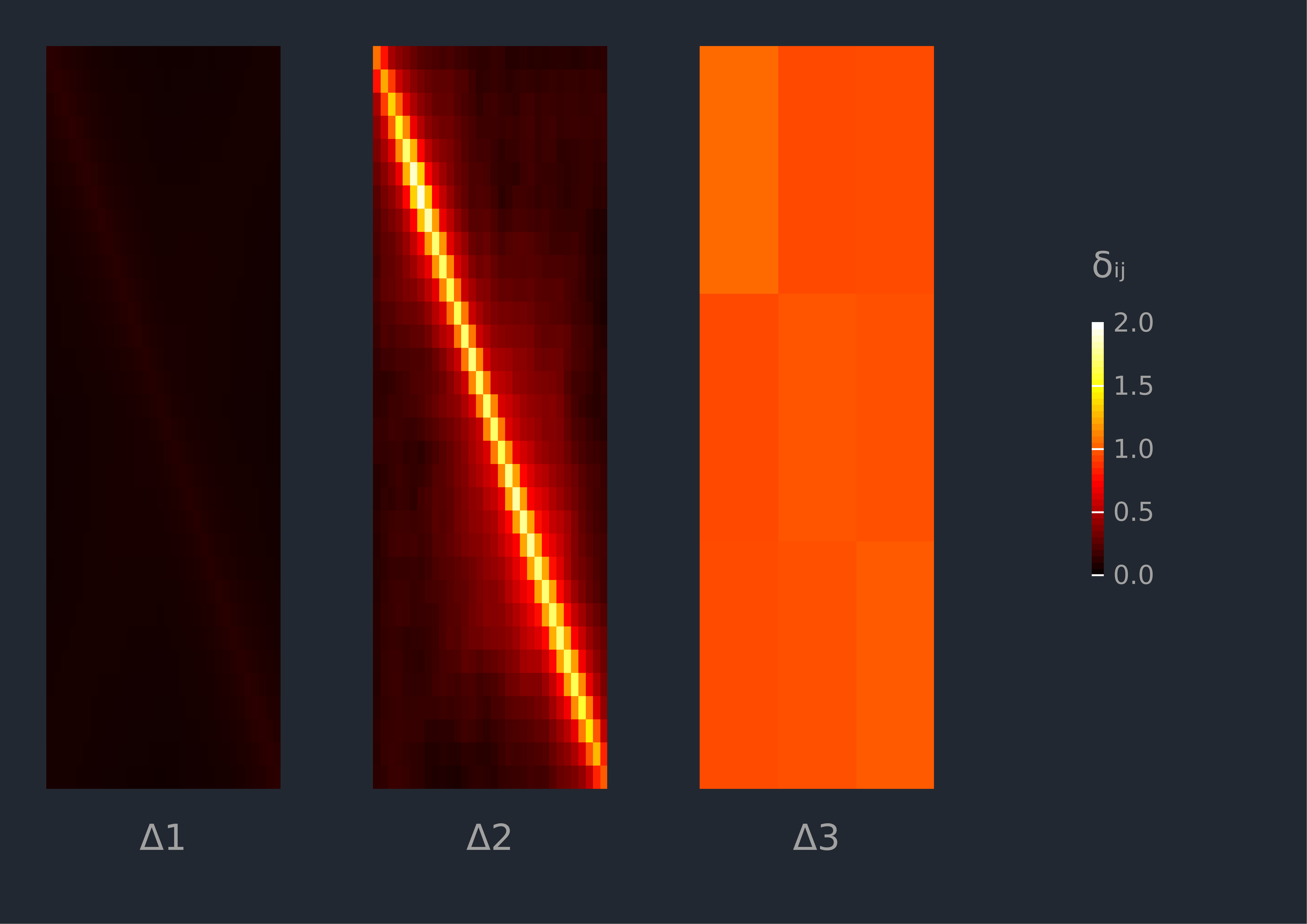}
		\caption{Scale matrices, $\Delta_d$, from the NIG model}
		\label{fig:img_scale}
	\end{center}
\end{figure} 

The rows ($\matdel_1$) have little variation. The columns ($\matdel_2$) exhibit a pattern of covariation consistent with images, one that decreases as the distance between pixels increases. All three slices ($\matdel_3$) have moderate level of variation. 
The location tensor $\tenM$ and the correlation matrices derived from the $\matdel_d$'s are visualized in Appendix H.

\section{Discussion}
In this paper, we derived a total of four skewed tensor variate distributions from a tensor variate normal variance mean mixture model. The densities, as well as expectations, characteristic functions, and two methods for parameter estimation were discussed. In addition parameter estimation was considered using an ECM algorithm. The distributions considered herein can be viewed as extensions of their multivariate and matrix variate counterparts. 

All of these models were considered in two simulation studies. The first considered tensor variate normal data, and the second considered skewed data. In these simulations, the skewed models were compared with the tensor variate normal distribution.
The four skewed distributions were fitted to a dataset consisting of coloured images of maple trees. The best model, as determined by the BIC, was the TVNIG model and all of the skewed models resulted in a better BIC than the TVN model. The resulting image of the estimated mean captured the trunk, branches, leaves and sky, This was also seen in the individual modes of the estimated mean tensor of the TVNIG model. 

Other skewed tensor distributions may be easily derived from their multivariate and matrix variate counterparts using hidden truncation methods; however, parameter estimation may become computationally infeasible due to the overall dimensionality of the tensor. This would be of particular concern when incorporating these distributions into the finite mixture model for use in clustering and classification, which will be one topic of future work. Another topic would be dimension reduction techniques and parsimonious models.

\bibliography{tv4}
\appendix
\section{Derivation details for the tensor variate skew $t$ distribution}\label{app_deriv}

Suppose $\fX$ is a random order-$D$ tensor with a $TVST_{{\bf n}}(\tenM,\tenA,\bigotimes_{d=1}^D\matdel_d,\nu)$ distribution. $\fX$ can be written as $$\fX=\tenM+W\tenA+\sqrt{W}\fV$$
where $\tenM$ and $\tenA$ are ${\bf n}$ dimensional tensors, $\fV \sim \mathcal{N}_{{\bf p}}\left(\mathbb{O}, \bigotimes_{d=1}^D\matdel_d\right)$ and $W\sim \text{Inv-Gamma}\left(\frac{\nu}{2},\frac{\nu}{2}\right)$

The inverse Gamma density has the form:
$$ f(w|a,b) = \frac{b^a}{\Gamma(a)}w^{-(a+1)}\exp\left[-\frac{b}{w}\right] $$

It then follows that
$$
\fX|W=w\sim \mathcal{N}_{\bf p}\left(\tenM+w\tenA,w\bigotimes_{d=1}^D\matdel_d\right)
$$
\subsection{Joint density}

The joint density of $\fX$ and $W$ is
\begin{align*}
f(\tenX,w|\bvtheta)&=f(\tenX|W=w)f(w)\\
&= (2\pi)^{\frac{-n^{*}}{2}}\prod_{d=1}^D \left[w^{n_d}|\matdel_d|\right]^{-\frac{n^{*}}{2n_d}} \times\exp\left\{-\frac{1}{2}\vecc(\tenX-\tenM-w\tenA)\T\frac{1}{w}\bigotimes_{d=1}^D\matdel_d^{-1}\vecc(\tenX-\tenM-w\tenA)\right\} \\ 
& \hspace{0.2in} \times \frac{\frac{\nu}{2}^{\frac{\nu}{2}}}{\Gamma(\frac{\nu}{2})}\left[\frac{1}{w}\right]^{\frac{\nu}{2}+1}\exp\left\{\frac{\frac{\nu}{2}}{w}\right\} \\
&= \frac{\frac{\nu}{2}^{\frac{\nu}{2}}}{(2\pi)^{\frac{n^*}{2}}\prod_{d=1}^D| \matdel_d |^{\frac{n^*}{2n_d}}\Gamma(\frac{\nu}{2})}\cdot w^{-\left(\frac{\nu+n^*}{2}+1\right)}\\
& \hspace{0.2in} \times \exp\left\{-\frac{1}{2w}\left(\vecc(\tenX-\tenM-w\tenA)\T\bigotimes_{d=1}^D\matdel_{d}^{-1}\vecc(\tenX-\tenM-w\tenA)+\nu\right)\right\} \numberthis \label{eq:joint}
\end{align*}

This last expression is the same as equation \ref{eqn:joint}. The intermediate steps pertaining to the determinant of the scale matrices are :
\begin{align*}
\prod_{d=1}^D|w\matdel_d|^{-\frac{n^{*}}{2n_d}} &=
\prod_{d=1}^D \left[w^{n_d}|\matdel_d|\right]^{-\frac{n^{*}}{2n_d}}= w^{-\frac{n^{*}}{2}}\prod_{d=1}^D |\matdel_d|^{-\frac{n^{*}}{2n_d}}.
\end{align*}

Using the following identities:
\begin{itemize}
	\item $\vecc(\vecA+\vecE) = \vecc(\vecA) + \vecc(\vecE)$
	\item $\vecc(\alpha\vecA) = \alpha\vecc(\vecA)$
\end{itemize}
the exponential term in \eqref{eq:joint} can be written as:
\begin{align*}
&\exp\left\{-\frac{1}{2w}\left(\vecc(\tenX-\tenM-w\tenA)\T\bigotimes_{d=1}^D\matdel_{d}^{-1}\vecc(\tenX-\tenM-w\tenA)+\nu\right)\right\}\\
&=\exp\left\{-\frac{1}{2w}\left(\left[\vecc(\tenX-\tenM)\T - w\vecc(\tenA)\T\right] \bigotimes_{d=1}^D\matdel_{d}^{-1} \left[\vecc(\tenX-\tenM) - w\vecc(\tenA)\right] +\nu \right)\right\}\\
&=\exp\left\{-\frac{1}{2w}\left(\vecc(\tenX-\tenM)\T\bigotimes_{d=1}^D\matdel_{d}^{-1}\vecc(\tenX-\tenM) -w\vecc(\tenX-\tenM)\T\bigotimes_{d=1}^D\matdel_{d}^{-1}\vecc(\tenA) \right.\right.  \\&\quad
\left. \left. -w\vecc(\tenA)\T\bigotimes_{d=1}^D\matdel_{d}^{-1}\vecc(\tenX-\tenM) + w^2\vecc(\tenA)\T\bigotimes_{d=1}^D\matdel_{d}^{-1}\vecc(\tenA) + \nu \right)\right\}& \\
&=\exp\left\{-\frac{1}{2w}\left(\vecc(\tenX-\tenM)\T\bigotimes_{d=1}^D\matdel_{d}^{-1}\vecc(\tenX-\tenM) -2w\vecc(\tenX-\tenM)\T\bigotimes_{d=1}^D\matdel_{d}^{-1}\vecc(\tenA) \right.\right.\\&\quad
\left. \left. + w^2\vecc(\tenA)\T\bigotimes_{d=1}^D\matdel_{d}^{-1}\vecc(\tenA)  + \nu \right)\right\}
\end{align*}
Grouping the terms that incorporate $w$, facilitates the integration in section \ref{sec:md}.

\begin{align*}
&\exp\left\{ \vecc(\tenX-\tenM)\T\bigotimes_{d=1}^D\matdel_{d}^{-1}\vecc(\tenA)\right\} \times \\
%
&  \hspace{0.2in} \exp\left\{ -\frac{1}{2}\left[ \frac{\vecc(\tenX-\tenM)\T\bigotimes_{d=1}^D\matdel_{d}^{-1}\vecc(\tenX-\tenM) + \nu}{w} + w \vecc(\tenA)\T\bigotimes_{d=1}^D\matdel_{d}^{-1}\vecc(\tenA)\right] \right\}\\
\end{align*}
If we define $$\begin{array}{ccc}
\delta(\cdot)=\vecc(\tenX-\tenM)\T\bigotimes_{d=1}^D\matdel_{d}^{-1}\vecc(\tenX-\tenM) && \rho(\cdot)=\vecc(\tenA)\T\bigotimes_{d=1}^D\matdel_{d}^{-1}\vecc(\tenA),
\end{array}$$
the following expression is obtained:
\begin{equation}\label{eq:exp1}
\exp\left\{\left(\vecc(\tenX-\tenM)\T\bigotimes_{d=1}^D\matdel_{d}^{-1}\vecc(\tenA)\right)\right\} \times \exp\left\{-\frac{1}{2}\left[\frac{\delta(\tenX;\tenM,\bigotimes_{d=1}^D\matdel_{d}^{-1})+\nu}{w}+w\rho(\tenA;\bigotimes_{d=1}^D\matdel_{d}^{-1})\right]\right\}
\end{equation}

\subsection{Marginal Density}\label{sec:md}

Building on equations \ref{eq:joint} and \ref{eq:exp1}, the marginal density of $\tenX$ is
\begin{align}
\nonumber
f(\tenX)&=\int_{0}^{\infty}f(\tenX,w) dw\\
\nonumber
&=\frac{\frac{\nu}{2}^{\frac{\nu}{2}}}{(2\pi)^{\frac{n^*}{2}}\prod_{d=1}^D| \matdel_d |^{\frac{n^*}{2n_d}} \Gamma(\frac{\nu}{2})}\cdot\exp\left\{\vecc(\tenX-\tenM)\T\bigotimes_{d=1}^D\matdel_{d}^{-1}\vecc(\tenA)\right\} \\
\label{eq:md1}
& \hspace{0.25in} \times \int_{0}^\infty w^{-\left(\frac{\nu+n^*}{2}+1\right)}\exp\left\{-\frac{1}{2}\left[\frac{\delta(\tenX;\tenM,\bigotimes_{d=1}^D\matdel_{d}^{-1})+\nu}{w}+w\rho(\tenA,\bigotimes_{d=1}^D\matdel_{d}^{-1})\right]\right\}dw
\end{align}

Using the following change of variables, we can rearrange the integral in equation \ref{eq:md1}:
$$
\begin{array}{ccc}
\text{u} && \text{du} \\
\frac{\sqrt{\rho(\tenA,\bigotimes_{d=1}^D\matdel_{d}^{-1})}}{\sqrt{\delta(\tenX;\tenM,\bigotimes_{d=1}^D\matdel_{d}^{-1})+\nu}}w &&
\frac{\sqrt{\rho(\tenA,\bigotimes_{d=1}^D\matdel_{d}^{-1})}}{\sqrt{\delta(\tenX;\tenM,\bigotimes_{d=1}^D\matdel_{d}^{-1})+\nu}}dw 
\end{array}
$$

$$
\begin{array}{lr}
\text{Integral term 1} & \text{Integral term 2} \\
w^{-\left(\frac{\nu+n^*}{2}+1\right)} & \exp\left\{-\frac{1}{2}\left[\frac{\delta(\cdot)+\nu}{w}+w\rho(\cdot)\right]\right\}dw\\
\left[\frac{\sqrt{\delta(\cdot)+\nu}}{\sqrt{\rho(\cdot)}}u \right]^{-\left(\frac{\nu+n^*}{2}+1\right)} & \exp\left\{-\frac{1}{2}\left[ \frac{\sqrt{\delta(\cdot)+\nu}\sqrt{\delta(\cdot)+\nu}\sqrt{\rho(\cdot)}}{w\sqrt{\rho(\cdot)}} + \frac{w\sqrt{\rho(\cdot)}\sqrt{\rho(\cdot)}\sqrt{\delta(\cdot)+\nu}}{\sqrt{\delta(\cdot)+\nu}} \right] \right\}dw\\
\left[\left[\frac{\delta(\cdot)+\nu}{\rho(\cdot)}\right]^\frac{1}{2}\right]^{-\left(\frac{\nu+n^*}{2}+1\right)} u^{-\left(\frac{\nu+n^*}{2}+1\right)} & \exp\left\{-\frac{1}{2}\left[ \frac{\sqrt{\delta(\cdot)+\nu}\sqrt{\rho(\cdot)}}{u} + \sqrt{\rho(\cdot)}\sqrt{\delta(\cdot)+\nu}u \right] \right\}dw\\
\left[\frac{\delta(\cdot)+\nu}{\rho(\cdot)}\right]^{-\left(\frac{\nu+n^*}{4}+\frac{1}{2}\right)} u^{-\left(\frac{\nu+n^*}{2}+1\right)} & 
\exp\left\{-\frac{\sqrt{\delta(\cdot)+\nu}\sqrt{\rho(\cdot)}}{2}\left[ u + \frac{1}{u} \right] \right\} \left[\frac{\delta(\cdot)+\nu}{\rho(\cdot)}\right]^\frac{1}{2}du\\
\end{array}
$$

Putting the two terms together, we have:

\begin{align*}
\int_{0}^\infty \left[\frac{\delta(\cdot)+\nu}{\rho(\cdot)}\right]^{-\left(\frac{\nu+n^*}{4}+\frac{1}{2}\right)} u^{-\left(\frac{\nu+n^*}{2}+1\right)}\left[\frac{\delta(\cdot)+\nu}{\rho(\cdot)}\right]^\frac{1}{2} \exp\left\{-\frac{\sqrt{\delta(\cdot)+\nu}\sqrt{\rho(\cdot)}}{2}\left[ u + \frac{1}{u} \right] \right\} du & \rightarrow \\
\left[\frac{\delta(\cdot)+\nu}{\rho(\cdot)}\right]^{-\frac{\nu+n^*}{4}} \int_{0}^\infty u^{-\left(\frac{\nu+n^*}{2}+1\right)} \exp\left\{-\frac{\sqrt{\delta(\cdot)+\nu}\sqrt{\rho(\cdot)}}{2}\left[ u + \frac{1}{u} \right] \right\} du
\end{align*}

The integral is now a bessel function of the second kind.

We can write the marginal density, $f(\tenX)$ as:

\begin{align}
\nonumber f_{\text{TVST}}(\tenX|\bvtheta)=&\frac{2\left(\frac{\nu}{2}\right)^{\frac{\nu}{2}}\exp\left\{\vecc(\tenX-\tenM)\T\bigotimes_{d=1}^D\matdel_{d}^{-1}\vecc(\tenA) \right\} }
{(2\pi)^{\frac{n^*}{2}}\prod_{d=1}^D| \matdel_d |^{\frac{n^*}{2n_d}} \Gamma(\frac{\nu}{2})}
\left(\frac{\delta(\tenX;\tenM,\bigotimes_{d=1}^D\matdel_{d}^{-1})+\nu}{\rho(\vecA,\bigotimes_{d=1}^D\matdel_{d}^{-1})}\right)^{-\frac{\nu+n^*}{4}} \\
\label{eq:pdfST}& \times K_{-\frac{\nu+n^*}{2}}\left(\sqrt{\left[\rho\left(\tenA,\bigotimes_{d=1}^D\matdel_{d}^{-1}\right)\right]\left[\delta\left(X;\tenM,\bigotimes_{d=1}^D\matdel_{d}^{-1}\right)+\nu\right]}\right)
\end{align}

\section{Expectations}\label{sec:app_exp}
Here, we provide a proof of Theorem \ref{the:Expec}.
Define the following terms; $\vecn_{2:D} = \prod_{d=2}^Dn_d$, $\breve{\matdel} = \bigotimes_{d=D}^2\matdel_d$ and $\breve{\matdel}^\half = \bigotimes_{d=D}^2\matdel_d^\half$. Then the expectations can be calculated as follows.

\begin{align*}
\EV\left[\vecc(\fX)\vecc(\fX)\T\right] &= \EV\left[\vecc(\vecX_{(1)})\vecc(\vecX_{(1)})\T\right] \\
&= \EV\left[\left\{\vecc(\vecMone)+W\vecc(\vecA_{(1)})+\sqrt{W}(\breve{\matdel}^\half \otimes \vecDelta_1^\half)\vecc(\vecZ_{(1)})\right\} \times \right.\\
& \qquad \left. \left\{\vecc(\vecMone)\T + W\vecc(\vecA_{(1)})\T + \sqrt{W}\vecc(\vecZ_{(1)})\T(\breve{\matdel}^\half \otimes \vecDelta_1^\half)\T\right\}\right]\\
&= \vecc(\vecMone)\vecc(\vecMone)\T + \EV[W]\vecc(\vecMone)\vecc(\vecA_{(1)})\T + \EV[W]\vecc(\vecA_{(1)})\vecc(\vecMone)\T \\
& \qquad + \EV[W^2]\vecc(\vecA_{(1)})\vecc(\vecA_{(1)})\T \\
& \qquad + \EV[W](\breve{\matdel}^\half \otimes \vecDelta_1^\half)\EV\left[\vecc(\vecZ_{(1)})\vecc(\vecZ_{(1)})\T\right](\breve{\matdel}^\half \otimes \vecDelta_1^\half)\T \\
&= \vecc(\vecMone)\vecc(\vecMone)\T + \EV[W]\vecc(\vecMone)\vecc(\vecA_{(1)})\T + \EV[W]\vecc(\vecA_{(1)})\vecc(\vecMone)\T \\
& \qquad + \EV[W^2]\vecc(\vecA_{(1)})\vecc(\vecA_{(1)})\T  + \EV[W]\left(\bigotimes_{d=D}^1\matdel_d\right).
\end{align*}

\begin{align*}
\EV\left[\vecX_{(1)}\vecX_{(1)}\T\right] &= \EV\left[\left\{\vecM_{(1)} + W\vecA_{(1)} + \sqrt{W}\matdel_1^\half\vecZ_{(1)}\breve{\matdel}^\half\right\} \times \left\{\vecM_{(1)} + W\vecA_{(1)} + \sqrt{W}\matdel_1^\half\vecZ_{(1)}\breve{\matdel}^\half\right\}\T \right]\\
&= \vecM_{(1)}\vecM_{(1)}\T + \EV\left[W\right]\vecM_{(1)}\vecA_{(1)}\T + \EV\left[W\right]\vecA_{(1)}\vecM_{(1)}\T + \EV\left[W^2\right]\vecA_{(1)}\vecA_{(1)}\T\\
& \qquad + \EV[W]\vecDelta_1^\half \EV\left[\vecZ_{(1)}\breve{\matdel}\vecZ_{(1)}\T\right] \vecDelta_1^{\frac{\top}{2}}
\end{align*}

The $\vecZ_{(1)}$ term can be broken into row vectors, $\vecz_{(1)} \in \mathbb{R}^{1\times \vecn_{2:D}}$, and therefore
\begin{align*}
\EV\left[\vecz_{(1)}\breve{\matdel}\vecz_{(1)}\T\right] &= \EV\left[\tr\left(\vecz_{(1)}\T\vecz_{(1)}\breve{\matdel}\right)\right] \\
&= \tr\left(\EV\left[\vecz_{(1)}\T\vecz_{(1)}\right]\breve{\matdel}\right)\\
&= \tr\left(\ident_{\vecn_{2:D}}\breve{\matdel}\right) = \tr\left(\breve{\matdel}\right),
\end{align*}
\begin{align*}
\EV\left[\vecZ_{(1)}\breve{\matdel}\vecZ_{(1)}\T\right] &= \ident_{n_1} \times \tr\left(\breve{\matdel}\right) \\
&= \ident_{n_1} \times \tr\left(\bigotimes_{d=D}^2\matdel_d \right) \\
&= \ident_{n_1} \times \prod_{d=2}^D\tr\left(\matdel_d\right).
\end{align*}
The desired expectation is then,
\begin{align*}
\EV\left[\vecX_{(1)}\vecX_{(1)}\T\right] &= \vecM_{(1)}\vecM_{(1)}\T + \EV\left[W\right]\vecM_{(1)}\vecA_{(1)}\T + \EV\left[W\right]\vecA_{(1)}\vecM_{(1)}\T + \EV\left[W^2\right]\vecA_{(1)}\vecA_{(1)}\T\\
& \qquad +\EV[W]\matdel_1\times\prod_{d=2}^D\tr\left(\matdel_d\right).
\end{align*}
Finally,
\begin{align*}
\EV\left[\vecX_{(1)}\T\vecX_{(1)}\right] &= \EV\left[\left\{\vecM_{(1)} + W\vecA_{(1)} + \sqrt{W}\matdel_1^\half\vecZ_{(1)}\breve{\matdel}^\half\right\}\T \times \left\{\vecM_{(1)} + W\vecA_{(1)} + \sqrt{W}\matdel_1^\half\vecZ_{(1)}\breve{\matdel}^\half\right\}\right]\\
&= \vecM_{(1)}\T\vecM_{(1)} + \EV\left[W\right]\vecM_{(1)}\T\vecA_{(1)} + \EV\left[W\right]\vecA_{(1)}\T\vecM_{(1)} + \EV\left[W^2\right]\vecA_{(1)}\T\vecA_{(1)}\\
& \qquad +\EV[W]\breve{\matdel}^{{\frac{\top}{2}}} \EV\left[\vecZ_{(1)}\T\vecDelta_1\vecZ_{(1)}\right] \breve{\matdel}^{\half}.
\end{align*}

The $\vecZ_{(1)}\T$ term can be broken into row vectors, $\vecz\T_{(1)} \in \mathbb{R}^{1\times n_1}$ and therefore,
\begin{align*}
\EV\left[\vecz_{(1)}\T\matdel_1\vecz_{(1)}\right] &= \EV\left[\tr\left(\vecz_{(1)}\vecz_{(1)}\T\matdel_1\right)\right] \\
&= \tr\left(\EV\left[\vecz_{(1)}\vecz_{(1)}\T\right]\matdel_1\right)\\
&= \tr\left(\ident_{n_1}\matdel_1\right)= \tr\left(\matdel_1\right)\\
\EV\left[\vecZ_{(1)}\T\vecDelta_1\vecZ_{(1)}\right] &= \ident_{\vecn_{2:D}} \times  \tr\left(\matdel_1\right).
\end{align*}
We then arrive at the following expectation,
\begin{align*}
\EV\left[\vecX_{(1)}\T\vecX_{(1)}\right] &= \vecM_{(1)}\T\vecM_{(1)} + \EV\left[W\right]\vecM_{(1)}\T\vecA_{(1)} + \EV\left[W\right]\vecA_{(1)}\T\vecM_{(1)} + \EV\left[W^2\right]\vecA_{(1)}\T\vecA_{(1)}\\
& \qquad + \EV[W]\breve{\matdel}\times  \tr\left(\matdel_1\right).
\end{align*}

\section{Characteristic Function Derivations}
\subsubsection*{TVST}
If $\fX$ follows a tensor variate skew-$t$ distribution with $\nu$ degrees of freedom, then the characteristic function is
\begin{align*}
\mathcal{C}_{\fX}&=\exp\{i\vect'\vecmu\}\int_0^{\infty}\exp\left\{iwa-\frac{1}{2}wb\right\}h(w)dw\\
&=\exp\{i\vect'\vecmu\}\int_0^{\infty}\exp\left\{iwa-\frac{1}{2}wb\right\}\frac{\frac{\nu}{2}^{\frac{\nu}{2}}}{\Gamma\left(\frac{\nu}{2}\right)}w^{-\frac{\nu}{2}-1}\exp\left\{\frac{-\nu}{2w}\right\}dw\\
&=\exp\{i\vect'\vecmu\}\frac{\frac{\nu}{2}^{\frac{\nu}{2}}}{\Gamma\left(\frac{\nu}{2}\right)}\int_0^{\infty}\exp\left\{iwa\right\}w^{-\frac{\nu}{2}-1}\exp\left\{-\frac{1}{2}wb-\frac{\nu}{2w}\right\}dw\\
&=\exp\{i\vect'\vecmu\}\frac{2\frac{\nu}{2}^{\frac{\nu}{2}}K_{-\frac{\nu}{2}}(\sqrt{\nu b})
}{\Gamma\left(\frac{\nu}{2}\right)\left(\frac{b}{\nu}\right)^{-\frac{\nu}{4}}}\mathcal{C}_{\text{GIG}}\left(a~|~b,\nu,-\frac{\nu}{2}\right).
\end{align*}
\subsubsection*{TVGH}
If $\fX$ follows a tensor variate variance gamma distribution with concentration parameter $\omega$ and index parameter $\lambda$ then the characteristic function is
\begin{align*}
\mathcal{C}_{\fX}&=\exp\{i\vect'\vecmu\}\int_0^{\infty}\exp\left\{iwa-\frac{1}{2}wb\right\}\frac{w^{\lambda-1}}{2K_{\lambda}(\omega)}\exp\left\{-\frac{\omega w}{2}-\frac{\omega}{2w}\right\}dw\\
&=\exp\{i\vect'\vecmu\}\frac{1}{2K_{\lambda}(\omega)}\int_0^{\infty}\exp\left\{iwa\right\}w^{\lambda-1}\exp\left\{-\frac{(\omega+b) w}{2}-\frac{\omega}{2w}\right\}dw\\
&=\exp\{i\vect'\vecmu\}\frac{K_{\lambda}(\sqrt{(\omega+b)\omega})}{K_{\lambda}(\omega)\left(\frac{\omega+b}{\omega}\right)^{\frac{\lambda}{2}}}\mathcal{C}_{\text{GIG}}(a~|~\omega+b,\omega,\lambda)\\
\end{align*}

\subsubsection*{TVVG}
If $\fX$ follows a tensor variate variance gamma distribution with concentration parameter $\gamma$ then the characteristic function is
\begin{align*}
\mathcal{C}_{\fX}&=\exp\{i\vect'\vecmu\}\int_0^{\infty}\exp\left\{iwa-\frac{1}{2}wb\right\}\frac{\gamma^{\gamma}}{\Gamma\left(\gamma\right)}w^{\gamma-1}\exp\left\{ -\gamma w \right\}dw\\
&=\exp\{i\vect'\vecmu\}\frac{\gamma^{\gamma}}{\Gamma\left(\gamma\right)}\int_0^{\infty}\exp\left\{iwa\right\}w^{\gamma-1}\exp\left\{ -\left(\gamma+\frac{1}{2}b\right) w \right\}dw\\
&=\exp\{i\vect'\vecmu\}\frac{\gamma^{\gamma}}{\left(\gamma+\frac{1}{2}b\right)^{\gamma}}\mathcal{C}_{\text{Gamma}}\left(a~\middle|~\gamma,\gamma+\frac{1}{2}b\right),\\
\end{align*}
where 
$$
\mathcal{C}_{\text{Gamma}}\left(a~\middle|~\gamma,\gamma+\frac{1}{2}b\right)=\left(1-\frac{2ia}{2\gamma+b}\right)^{-\gamma}
$$
is the characteristic function of a gamma distribution with parameters $\gamma$ and $\gamma+\frac{1}{2}b$ evaluated at $a$. 

\subsubsection*{TVNIG}
If $\fX$ follows a tensor variate variance gamma distribution with concentration parameter $\kappa$ then the characteristic function is
\begin{align*}
\mathcal{C}_{\fX}&=\exp\{i\vect'\vecmu\}\int_0^{\infty}\exp\left\{iwa-\frac{1}{2}wb\right\}\frac{\exp\{\kappa\}w^{-\frac{2}{3}}}{\sqrt{2\pi}}\exp\left\{-\frac{1}{2}\left(\kappa^2w+\frac{1}{w}\right)\right\}dw\\
&=\exp\{i\vect'\vecmu\}\frac{\exp\{\kappa\}}{\exp\{\sqrt{\kappa^2+b}\}}\int_0^{\infty}\exp\left\{iwa\right\}\frac{\exp\{\sqrt{\kappa^2+b}\}w^{-\frac{2}{3}}}{\sqrt{2\pi}}\exp\left\{-\frac{1}{2}\left((\kappa^2+b)w+\frac{1}{w}\right)\right\}dw\\
&=\exp\{i\vect'\vecmu\}\frac{\exp\{\kappa\}}{\exp\{\sqrt{\kappa^2+b}\}}\mathcal{C}_{\text{IG}}(a~|~1,\sqrt{\kappa^2+b}),
\end{align*}
where
$$
\mathcal{C}_{\text{IG}}(t~|~1,\gamma)=\exp\left\{\gamma\left(1-\sqrt{1-\frac{2it}{\gamma^2}}\right)\right\}
$$
is the characteristic function of the $\text{IG}(1,\gamma)$ distribution.

\section{E-Step Distributions}
Recall, that we need to find conditional expectations in the E-step of the ECM algorithm. The conditional distributions each follow a GIG distribution with specific parameters shown below. 
\begin{align*}
W_{i}^{\text{ST}}~|~\vecX_i&\sim \text{GIG}\left(\rho\left(\tenA,\bigotimes_{d=1}^D\matdel_{d}^{-1}\right),\delta\left(\tenX;\tenM,\bigotimes_{d=1}^D\matdel_{d}^{-1}\right)+\nu,-(\nu+n^*)/2\right),\\
W_{i}^{\text{GH}}~|~\vecX_i&\sim \text{GIG}\left(\rho\left(\tenA,\bigotimes_{d=1}^D\matdel_{d}^{-1}\right)+\omega,\delta\left(\tenX;\tenM,\bigotimes_{d=1}^D\matdel_{d}^{-1}\right)+\omega,\lambda-{n^*}/{2}\right),\\
W_{i}^{\text{VG}}~|~\vecX_i&\sim \text{GIG}\left(\rho\left(\tenA,\bigotimes_{d=1}^D\matdel_{d}^{-1}\right)+2\gamma,\delta\left(\tenX;\tenM,\bigotimes_{d=1}^D\matdel_{d}^{-1}\right),\gamma-{n^*}/{2}\right),\\
W_{i}^{\text{NIG}}~|~\vecX_i,&\sim \text{GIG}\left(\rho\left(\tenA,\bigotimes_{d=1}^D\matdel_{d}^{-1}\right)+\kappa^2,\delta\left(\tenX;\tenM,\bigotimes_{d=1}^D\matdel_{d}^{-1}\right)+1,-{(1+n^*)}/{2}\right).
\end{align*}
Fortunately, the expectations of functions of a GIG random variable we need for the E-step can be found in a mathematically tractable form. Specifically if $Y\sim \text{GIG}(a,b,\lambda)$ then we have from \cite{mcnicholas16a} that

\begin{equation}
\mathbb{E}(Y)=\sqrt{\frac{b}{a}}\frac{K_{\lambda+1}(\sqrt{ab})}{K_{\lambda}(\sqrt{ab})},
\label{eq:EY}
\end{equation}

\begin{equation}
\mathbb{E}\left(\frac{1}{Y}\right)=\sqrt{\frac{a}{b}}\frac{K_{\lambda+1}(\sqrt{ab})}{K_{\lambda}(\sqrt{ab})}-\frac{2\lambda}{b},
\end{equation}

\begin{equation}\label{eq:elogy}
\mathbb{E}(\log(Y))=\log\left(\sqrt{\frac{b}{a}}\right)+\frac{1}{K_{\lambda}(\sqrt{ab})}\frac{\partial}{\partial \lambda}K_{\lambda}(\sqrt{ab}).
\end{equation}

\section{Updates for the Additional Parameters}
\subsection*{TVST}
In the case of the TVST distribution, the degrees of freedom $\nu$ needs to be updated. The update for the degrees of freedom cannot be obtained in closed form. Instead we solve Equation \eqref{eq:nuup} for $\nu$ to obtain $\hat{\nu}^{(t+1)}$. 

\begin{equation}
\log\left(\frac{\nu}{2}\right)+1-\varphi\left(\frac{\nu}{2}\right)-\frac{1}{N}\sum_{i=1}^N(b_i+c_i)=0,
\label{eq:nuup}
\end{equation}
where $\varphi(\cdot)$ is the digamma function.
\subsection*{TVGH}
In the case of the TVGH distribution, we would update $\lambda$ and $\omega$. In this case,
\begin{equation}
\mathcal{L}_1=N\log(K_{\lambda}(\omega))-\lambda\sum_{i=1}^Nc_i-\frac{1}{2}\omega\sum_{i=1}^N\left(a_i+b_i\right)
\label{eq:L1GH}
\end{equation}

The updates for $\lambda$ and $\omega$ cannot be obtained in closed form. However, \cite{browne15} discuss numerical methods for these updates, and therefore because the portion of the likelihood function that include these parameters is the same as in the multivariate case, the updates described in \cite{browne15} can be used directly here.

The updates for $\lambda$ and $\omega$ rely on the log convexity of $K_{\lambda}(\omega)$, \cite{baricz10}, in both $\lambda$ and $\omega$ and maximizing \eqref{eq:L1GH} via conditional maximization.
The resulting updates are
\begin{align}
\hat{\lambda}^{(t+1)}&=\overline{c}\hat{\lambda}^{(t)}\left[\left.\frac{\partial}{\partial s}\log(K_{s}(\hat{\omega}^{(t)}))\right|_{s=\hat{\lambda}^{(t)}}\right]^{-1} \label{eq:lamup}\\
\hat{\omega}^{(t+1)}&=\hat{\omega}^{(t)}-\left[\left.\frac{\partial}{\partial s}q(\hat{\lambda}^{(t+1)},s)\right|_{s=\hat{\omega}^{(t)}}\right]\left[\left.\frac{\partial^2}{\partial s^2}q(\hat{\lambda}^{(t+1)},s)\right|_{s=\hat{\omega}^{(t)}}\right]^{-1} \label{eq:omup}
\end{align}
where the derivative in \eqref{eq:lamup} is calculated numerically and $\overline{c}=\sum_{i=1}^Nc_i/N$. The partials in \eqref{eq:omup} are described in \cite{browne15}, and can be written as
$$
\frac{\partial}{\partial \omega}q(\lambda,\omega)=\frac{1}{2}[R_{\lambda}(\omega)+R_{-\lambda}(\omega)-(\overline{a}+\overline{b})],
$$
and 
$$
\frac{\partial^2}{\partial \omega^2}q(\lambda,\omega)=\frac{1}{2}\left[R_{\lambda}(\omega)^2-\frac{1+2\lambda}{\omega}R_{\lambda}(\omega)-1+R_{-\lambda}(\omega)^2-\frac{1-2\lambda}{\omega}R_{-\lambda}(\omega)-1\right],
$$
where $R_{\lambda}(\omega)=K_{\lambda+1}(\omega)/K_{\lambda}(\omega)$.

\subsubsection*{TVVG}
In the case of the TVVG, the update for $\gamma$ is needed. This update, like the TVST and TVGH, cannot be obtained in closed form. Instead, the update, $\gamma^{(t+1)}$, is obtained by solving \eqref{eq:gammup} for $\gamma$.
\begin{equation}
\log(\gamma)+1-\varphi(\gamma)+\overline{c}-\overline{a}=0.
\label{eq:gammup}
\end{equation}

\subsubsection*{TVNIG}
Finally, in the TVNIG case, the update for $\kappa$ can be written in closed form as 
$$
\hat{\kappa}=\frac{N}{\sum_{i=1}^Na_{i}}.
$$

\section{Computational Considerations}\label{sec:app_sw}

Singular $\vecDelta_{g}$ values were numerically regularized by adding a small positive quantity to the diagonal elements of the matrices \citep{williams2006}. The regularization is summarized in the following equation:
\begin{equation}\label{eq:covreg}
\tilde{\vecDelta} = \hat{\vecDelta} + \epsilon\ident,
\end{equation}
where $\epsilon \in (0,0.1]$, $\hat{\vecDelta}$ is the estimated singular scale matrix, and $\tilde{\vecDelta}$ is the regularized estimate of $\vecDelta$. We used $\epsilon = 0.001$ in our implementation. The singularity of $\hat{\vecDelta}$ was assessed by checking if its inverse condition number is less than machine epsilon. This regularization is often done implicitly in software implementations such as {\tt scikit-learn}'s GaussianMixture function, written in Python. The value of the regularization parameter $\epsilon$ could be tuned. The larger it is, the further the model results are from the true solution. The positive definiteness of the $\vecDelta_d$ matrices was checked using the Cholesky decomposition. 

To stop our ECM algorithms, we use a criterion based on the Aitken acceleration \citep{aitken26}. At iteration $t$ of the ECM algorithm, the Aitken acceleration is 
\begin{equation}
a^{(t)}=\frac{l^{(t+1)}-l^{(t)}}{l^{(t)}-l^{(t-1)}},
\end{equation}
where $l^{(t)}$ is the (observed) log-likelihood at iteration $t$. \cite{bohning94} use $a^{(t)}$ to calculate an asymptotic estimate of the log-likelihood at iteration $t+1$:
\begin{equation}
l^{(t+1)}_\infty=l^{(t)}+\frac{1}{1-a^{(t)}}(l^{(t+1)}-l^{(t)}).
\end{equation}
We stop the EM algorithm when $l^{(t+1)}_\infty-l^{(t)}<\epsilon$ \citep{mcnicholas10a}.

All the figures are made using Julia, using version 1.3.1 of the {\tt Gadfly} visualization package \citep{gadfly}. 
Bessel function values are calculated using 100 digit numbers, made possible by version 1.2.4 of the {\tt ArbNumerics.jl} library. We use numerical differentiation to find $\frac{\partial}{\partial \lambda}K_{\lambda}(\sqrt{ab})$ in Equation \ref{eq:elogy}.

\section{Simulation}\label{sec:app_sim}
\subsection{Normal Data}

Following Definition 2.2 in \cite{ohlson2013}, the normal data is generated using the following equation:  
\begin{equation}\label{eq:sim_mln}
\vecc(\tenX) = \vecc(\tenM) + \bigotimes_{d=1}^D\vecDelta_d^{\frac{1}{2}}\vecu \text{,}
\end{equation}
where $\vecu$ is a vector of iid $\mathcal{N}(0,1)$ random numbers. This is equivalent to the multivariate normal model for the vectorized version of the tensor data.  We can generate $\vecu$ as a mode $d$ tensor and use tensor $d$-mode products to implement the final term on the right hand side of \eqref{eq:sim_mln} \citep{kolda2009}.  It has the advantage of retaining the tensor structure of the data and not creating one large matrix from the Kronecker product, $\bigotimes_{d=1}^D\vecDelta_d^{\frac{1}{2}}$ and then having to permute the data back into a tensor format.


A signal-to-noise ratio of one half was applied to the simulated data prior to analysis. The $\vecDelta_d$ parameters were generated by specifying a diagonal matrix of eigenvalues and a random orthogonal matrix and combining them as you would in an eigen-decomposition of the scale matrix. An $n_d \times n_d$ orthogonal matrix was created by generating $n_d^2$ iid $N(0,1)$ random values, placing them in a matrix and orthogonalizing it with the QR decomposition. We restrict the condition number of these $\vecDelta_d$ matrices to be at most 10.


\begin{figure}[!htb]
	\begin{center}
		\includegraphics[height=0.5\textwidth]{./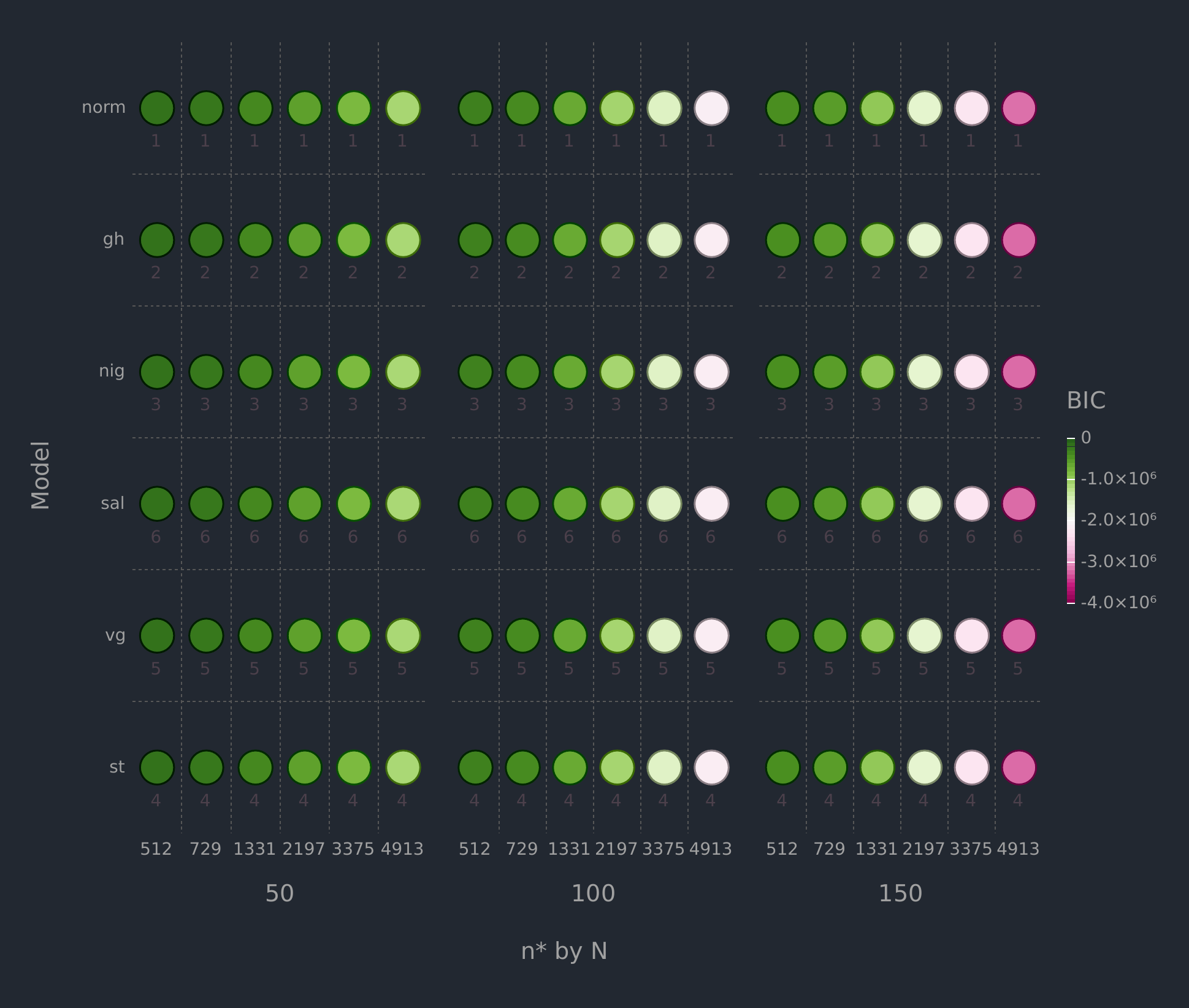}
		\caption{Average BIC and rank of the models for each combination of $n^*$ and $N$ for the normal data simulation.}
		\label{fig:sim_norm_bic}
	\end{center}
\end{figure}  

Figure \ref{fig:sim_norm_bic} summarizes the average BIC values and the rank of the models for the 100 simulations, across the values of $N$ and $n^*$ for each of the models. Based on their BIC values, the tensor normal is consistently the top performer, despite doing a poor job estimating the scale matrices. Of the skewed models, the TVGH and TVNIG models ranked highest. In Figures \ref{fig:sim_norm_ecdf_m}, empirical distribution plots of the relative error of the mode-1 matricization of $\mathbb{E}[\fX]$. We see that the normal, TVGH and TVNIG have very short tails, whereas the tails are longer for the other distributions, indicating higher variation in the error. The results for the TVST, TVVG and TVSAL are all influenced by $n^*$, were larger tensors result in elevated relative errors. Figure  \ref{fig:sim_norm_ecdf_kp} shows a similar plot for the error of the Kronecker product of the scale matrices. In this case, the empirical distributions for all of the skewed distributions have very short tails, whereas the tails are longer for the normal distribution. Tensor size has a small effect on each models relative error. 

\begin{figure}[!htb]
	\begin{center}
		\includegraphics[height=0.5\textwidth]{./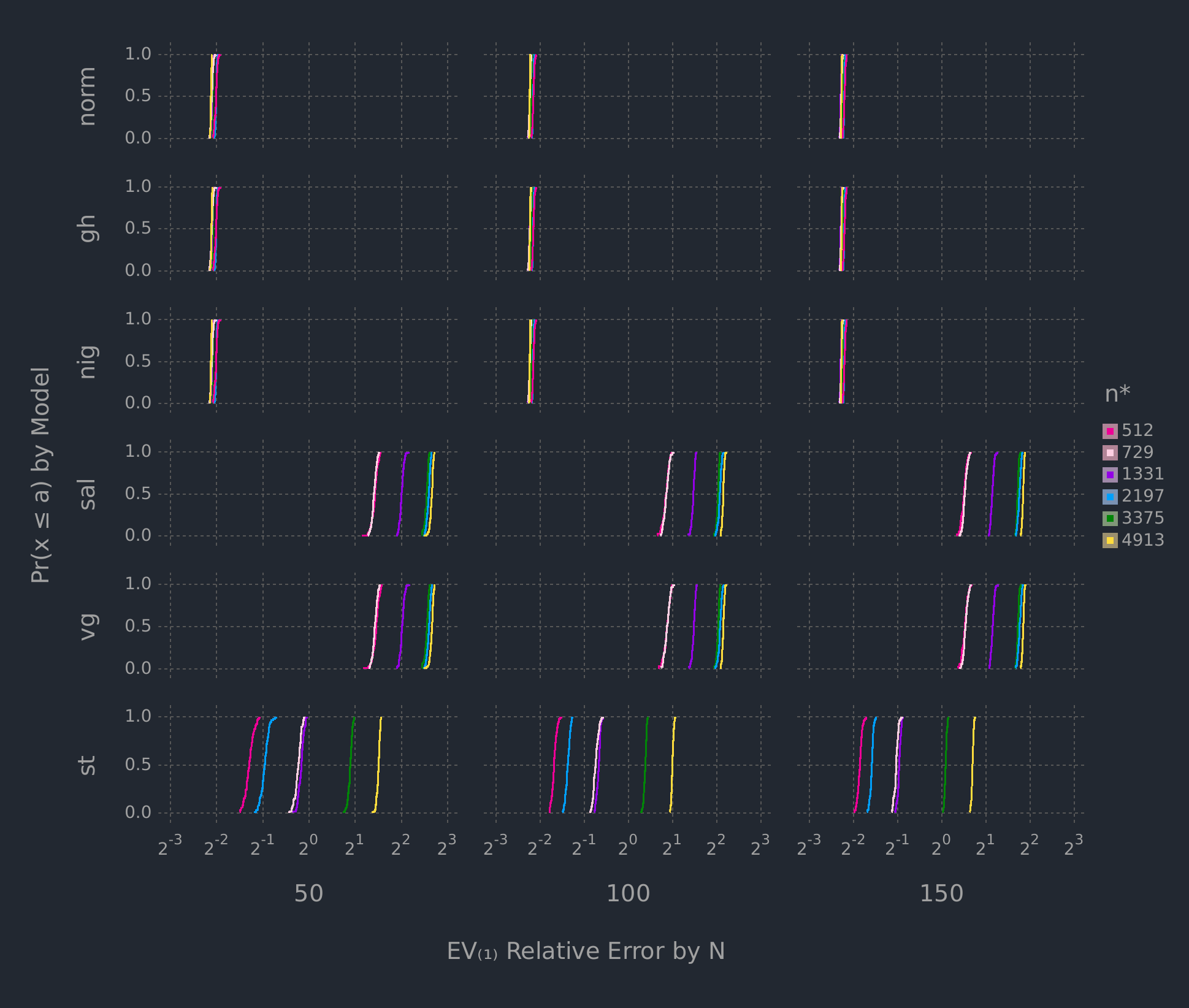}
		\caption{Empirical distribution plots of the relative error in the mode-1 matricization of $\mathbb{E}[\fX]$ for the normal data simulation.}
		\label{fig:sim_norm_ecdf_m}
	\end{center}
\end{figure}  

\begin{figure}[!htb]
	\begin{center}
		\includegraphics[height=0.5\textwidth]{./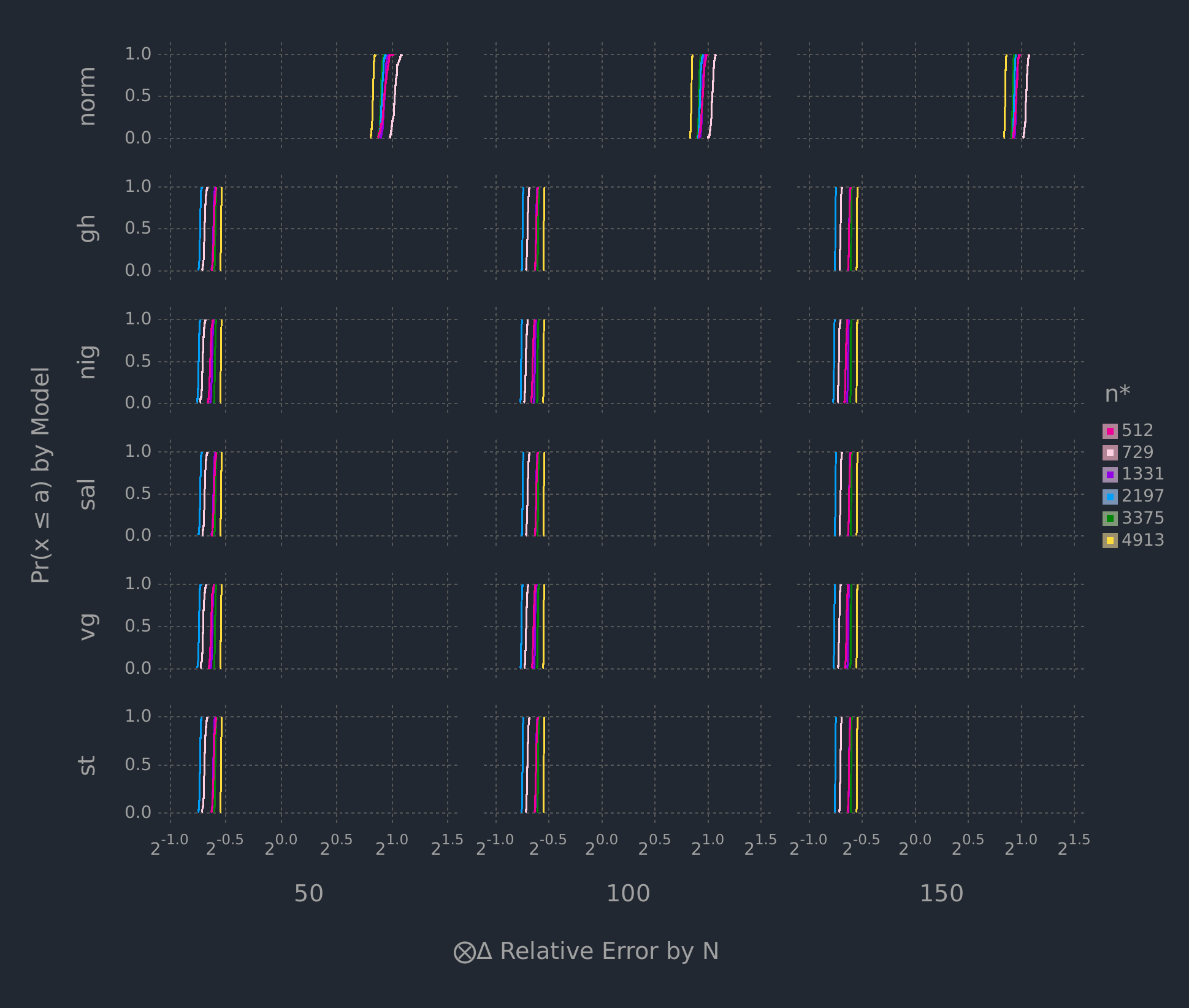}
		\caption{Empirical distribution plots of the relative error in $\bigotimes_{d=1}^D\vecDelta_d$ for the normal data simulation.}
		\label{fig:sim_norm_ecdf_kp}
	\end{center}
\end{figure}  

\subsection{Skewed Data}

We now consider the simulation study for which the data was simulated form the TVST distribution. The scale matrices were generated the same way as the normal data and they were combined as part of $\tenV$ using tensor $d$-mode products. We used a signal to noise ratio of one half. 

\begin{figure}[!htb]
	\begin{center}
		\includegraphics[height=0.5\textwidth]{./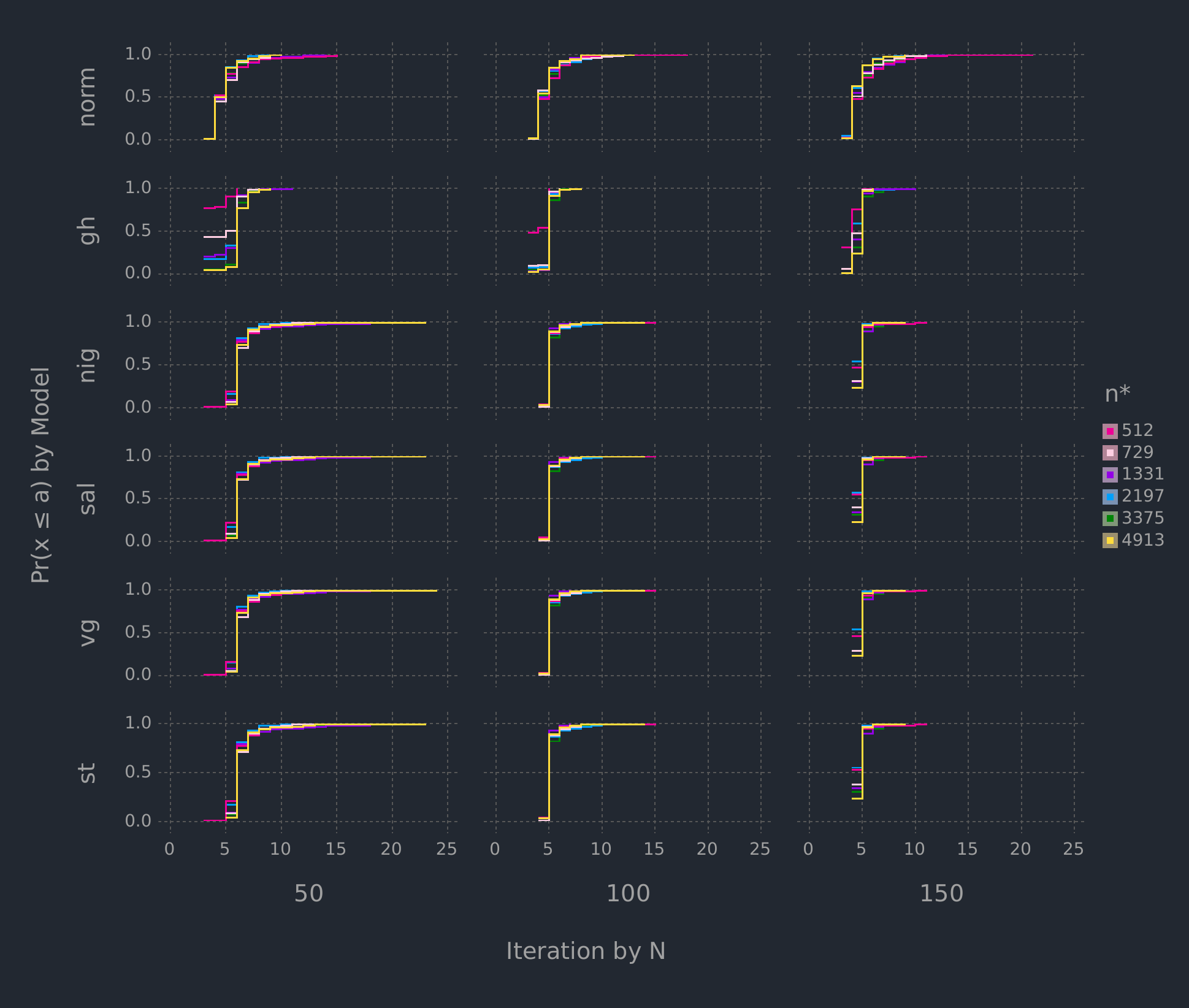}
		\caption{Number of iterations for each model by each combination of $N$ and $n^*$ for the skewed data simulation.}
		\label{fig:sim_skew_it}
	\end{center}
\end{figure}  

Figure \ref{fig:sim_skew_it}, which shows the empirical distribution for the number of iterations, indicates that the normal model had long tails for all values of $N$ and $n^*$, with the longest tails occurring for the smallest tensors ($n^*=512$). The tails for the skewed distributions decrease as the sample size increases. Aside from the TVGH, the values of $n^*$ do not affect the distribution of iterations.

\begin{figure}[!htb]
	\begin{center}
		\includegraphics[height=0.5\textwidth]{./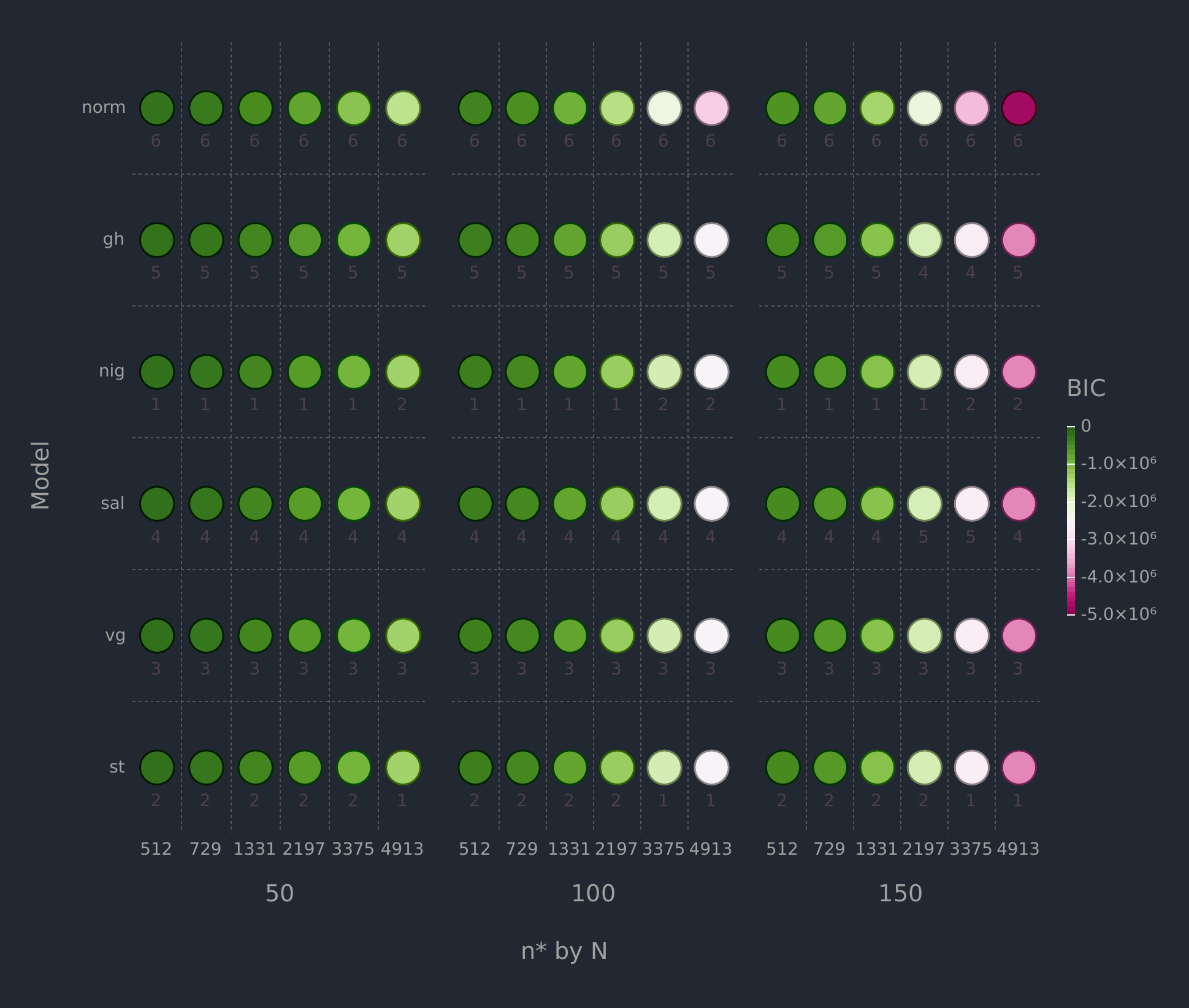}
		\caption{Average BIC and rank of the models for each combination of $N$ and $n^*$.}
		\label{fig:sim_skew_bic}
	\end{center}
\end{figure}  

Figure \ref{fig:sim_skew_bic}, which shows a similar plot to Figure \ref{fig:sim_norm_bic},  indicates the normal distribution is consistently the poorest performer among the models. For small to moderate sized tensors, the TVNIG model consistently ranks the highest. As $n^*$ reaches its maximum size, the TVST model overtakes the TVNIG model in the rankings. 

\begin{figure}[!htb]
	\begin{center}
		\includegraphics[height=0.5\textwidth]{./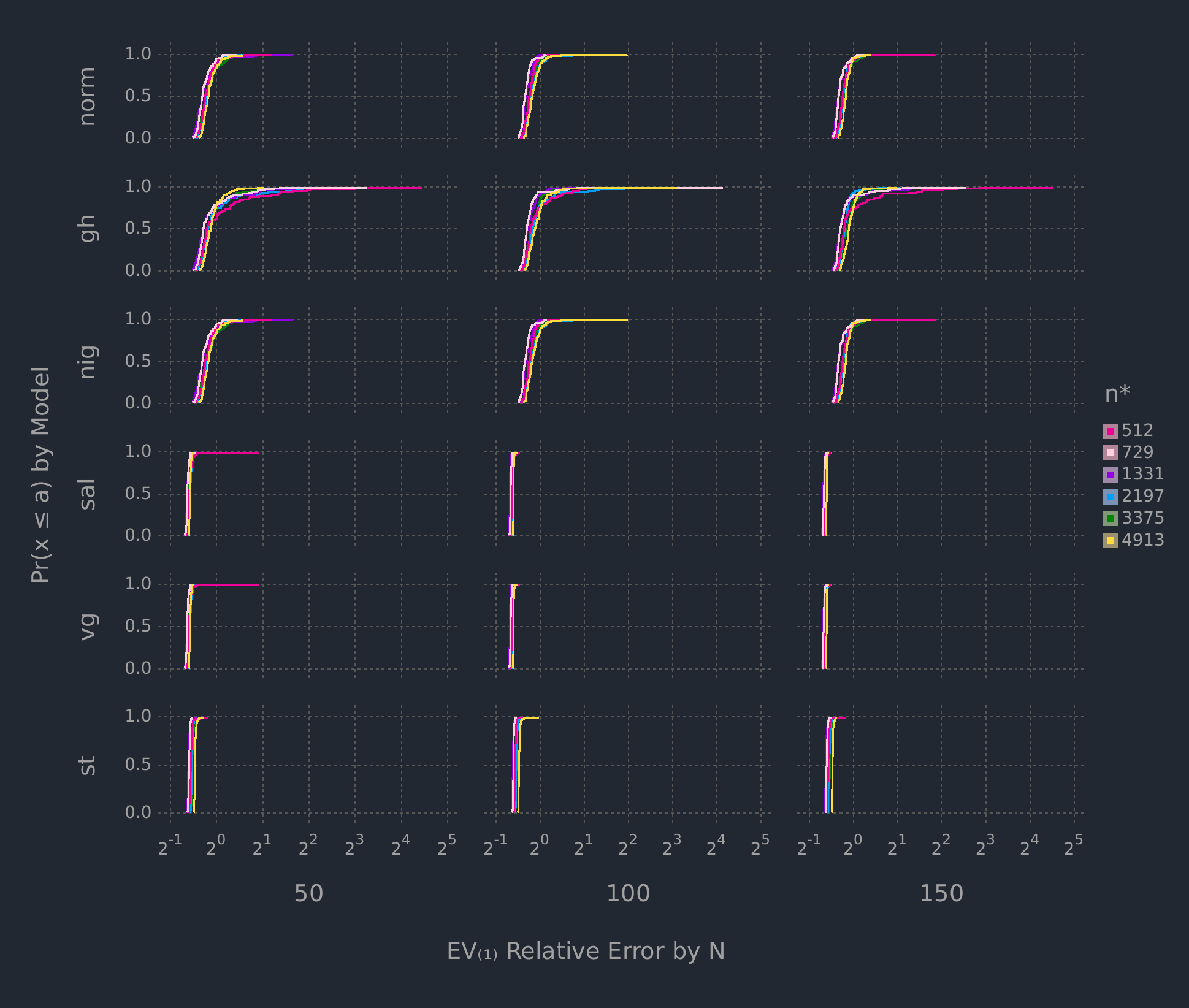}
		\caption{Empirical distribution plots of the relative error in the mode 1 matricization of $\mathbb{E}[\fX]$ for the skewed data simulation.}
		\label{fig:sim_skew_ecdf_m}
	\end{center}
\end{figure}  

The empirical distribution plots of the relative error of the mode-1 matricization of $\mathbb{E}[\fX]$ are plotted in Figure~\ref{fig:sim_skew_ecdf_m}. The Normal, TVGH and TVNIG models all have long right tails, irrespective of the size of $N$ and $n^*$. At small $N$ and $n^*$, the TVVG and TVSAL both have long right tails. Like the TVST, these tails are not present as both $N$ and $n^*$ increase.

\begin{figure}[!htb]
	\begin{center}
		\includegraphics[height=0.5\textwidth]{./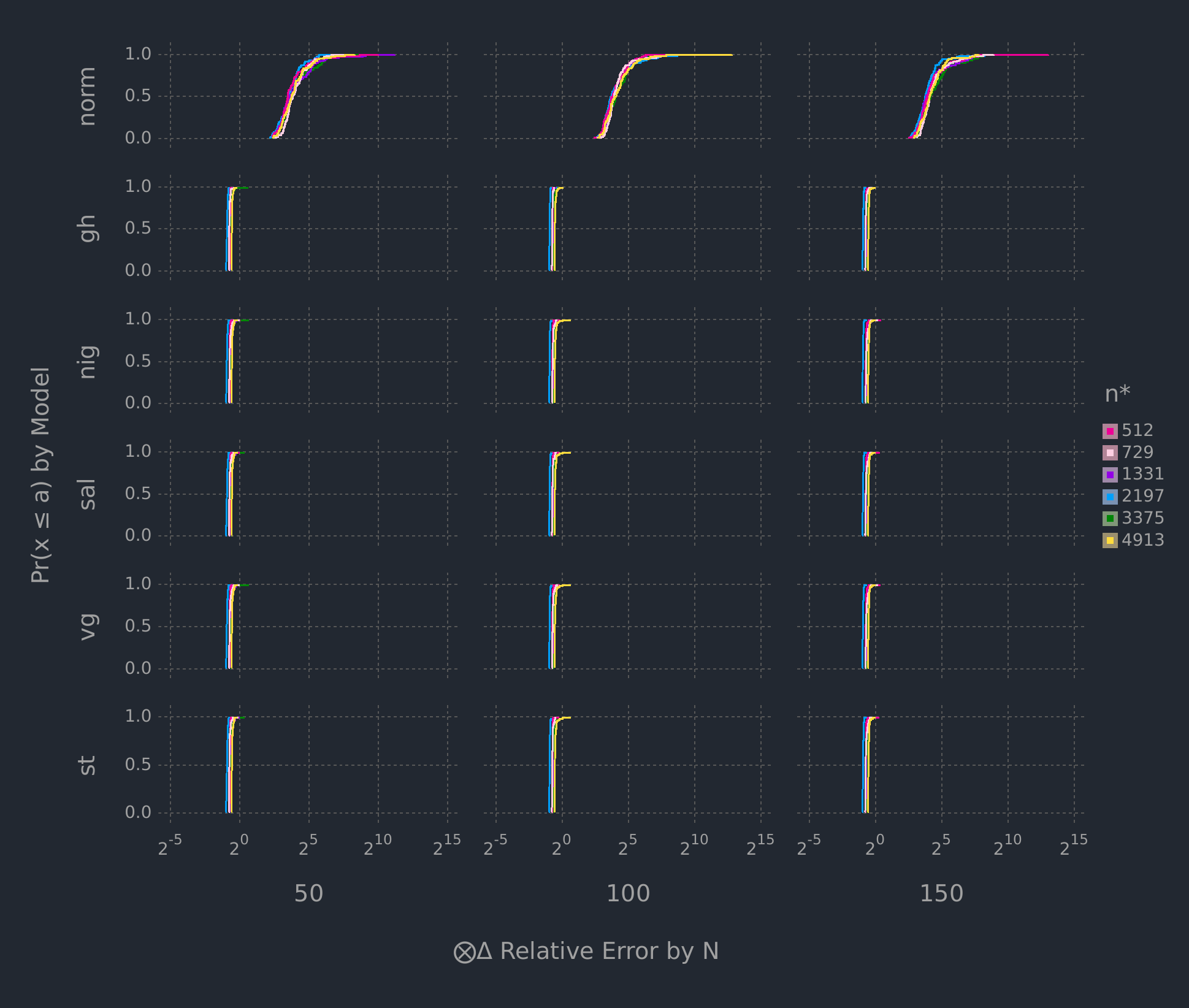}
		\caption{Empirical distribution plots of the relative error in $\bigotimes_{d=1}^D\vecDelta_d$ for the skewed data simulation.}
		\label{fig:sim_skew_ecdf_kp}
	\end{center}
\end{figure}  

Figure~\ref{fig:sim_skew_ecdf_kp} displays the distribution of relative errors for the Kronecker product of the scale matrices. The skewed distributions all have relative errors below 1 and are not influenced by $N$ or $n^*$. The Normal model performs very poorly across the range of $N$ and $n^*$ values, exhibiting very long right tails and median relative error values of $\sim 10$.

%
%

\begin{figure}[!htb]
	\begin{center}
		\includegraphics[height=0.5\textwidth]{./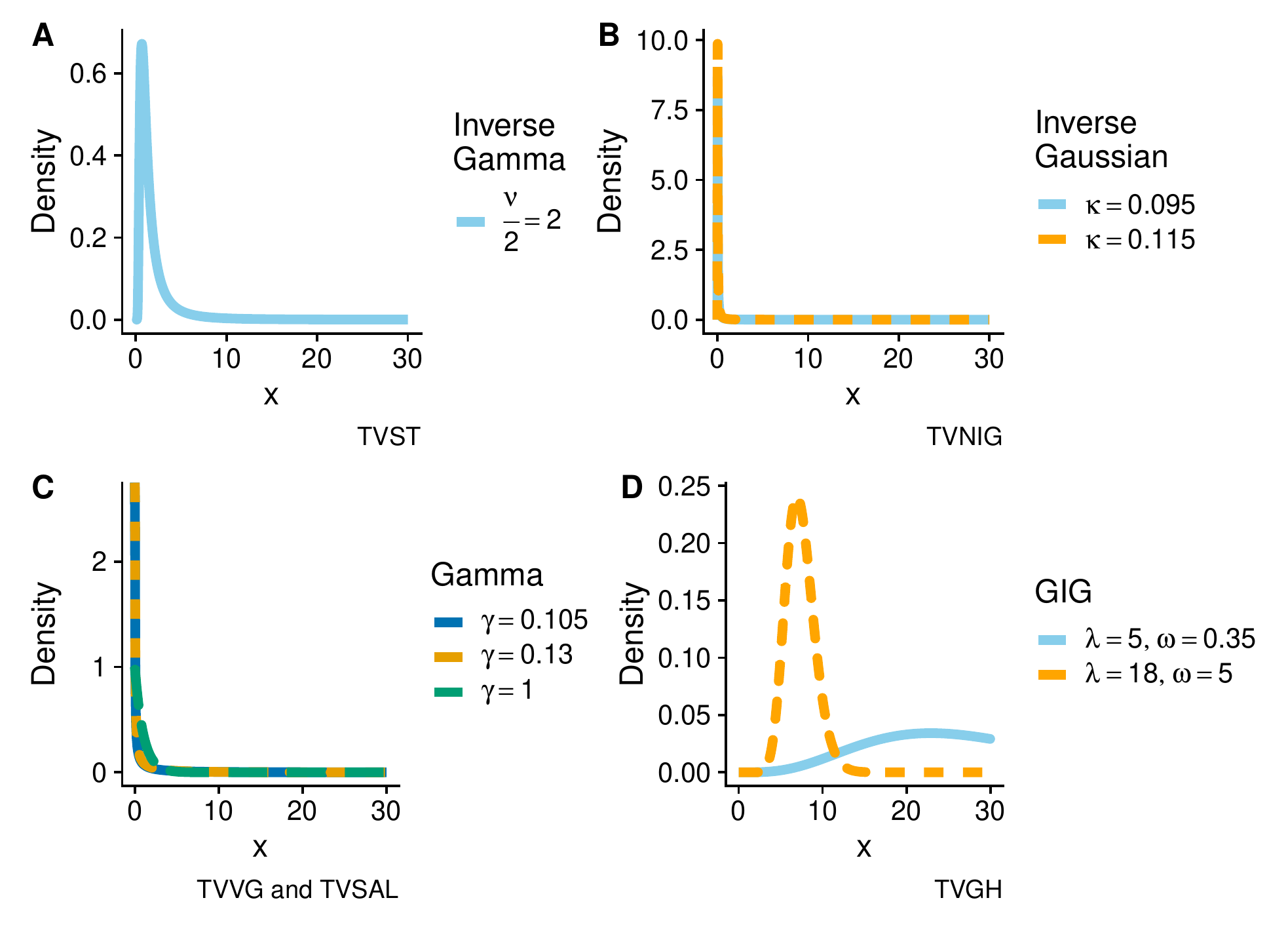}
		\caption{The underlying distributions of $W_{i,g}$ for each of the five tensor variate distributions.}
		\label{fig:sim_w_dist}
	\end{center}
\end{figure} 

The variability in the TVGH $\mathbb{E}[\fX]$ results, visualized in Figures \ref{fig:sim_skew_m} and \ref{fig:sim_skew_ecdf_m}, can be explained by the array of GIG parameter values learned from the data. Each of the underlying distributions of $W_{i,g}$ are visualized in figure \ref{fig:sim_w_dist}. Subplot A represents the distribution that was used to generate the simulated data.  Subplots B to D represent the smallest and largest value(s) of the $W_{i,g}$ distribution parameters seen in the simulations. The TVNIG, TVVG and TVSAL models are learning parameterizations that create densities resembling the inverse gamma density in subplot A. The shapes of the GIG distributions in Subplot D vary considerably, often looking nothing like the distribution in Subplot A. 

\begin{figure}[!htb]
	\begin{center}
		\includegraphics[height=0.5\textwidth]{./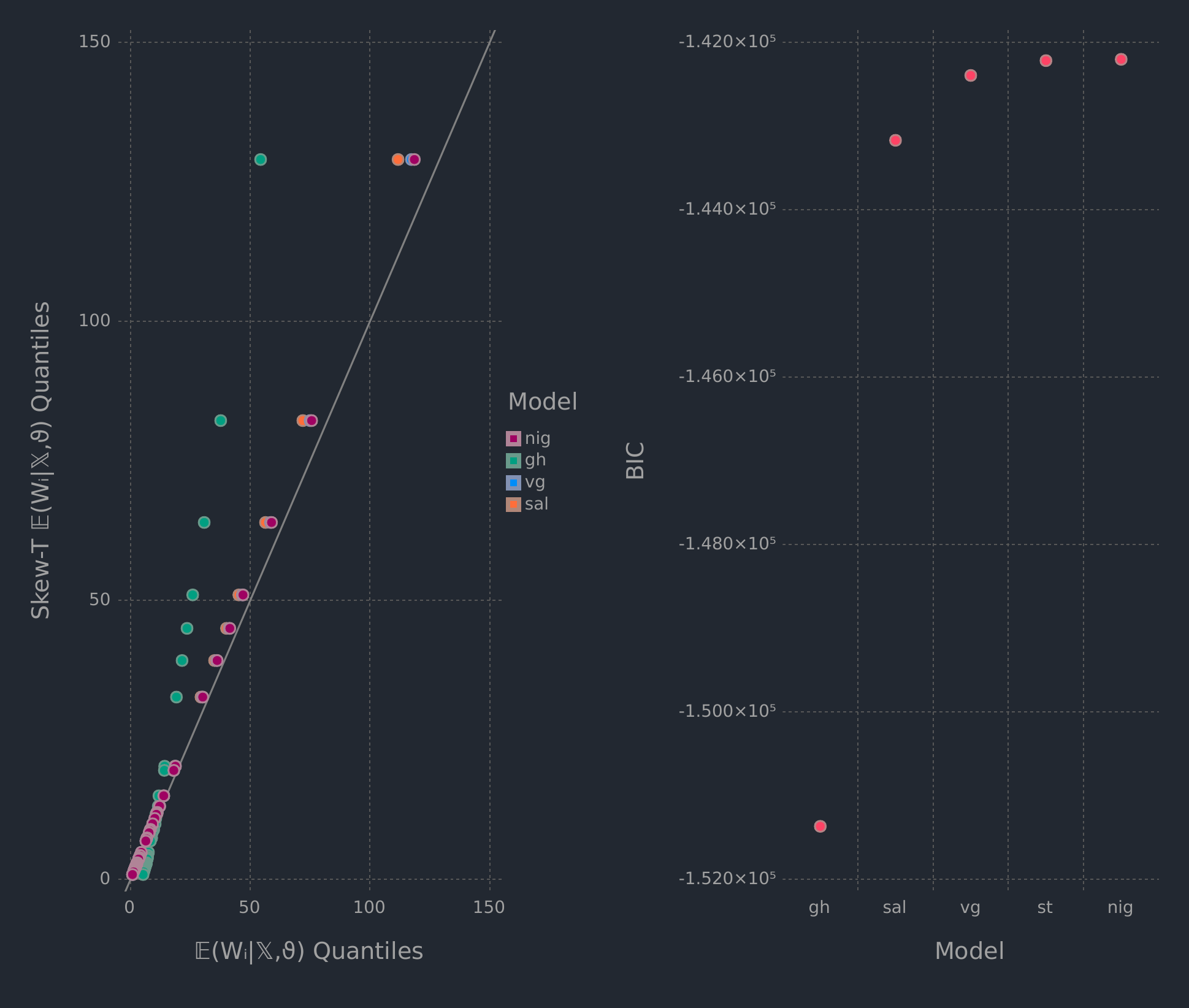}
		\caption{Left Panel:Quantile-quantile plot of $\mathbb{E}(W_{i}~|~\tenX_i,\hat{\bvtheta})$ from the $W_{i,g}$ distributions. The GIG for the TVGH distribution has $\omega = 0.34$ and $\lambda = 4.99$. These $\mathbb{E}(W_{i}~|~\tenX_i,\hat{\bvtheta})$ values directly impact the model parameter values (e.g.$\mathbb{E}[\fX]$). Right Panel: BIC values for each of the skewed models.}
		\label{fig:sim_ey1}
	\end{center}
\end{figure}  
\begin{figure}[!htb]
	\begin{center}
		\includegraphics[height=0.5\textwidth]{./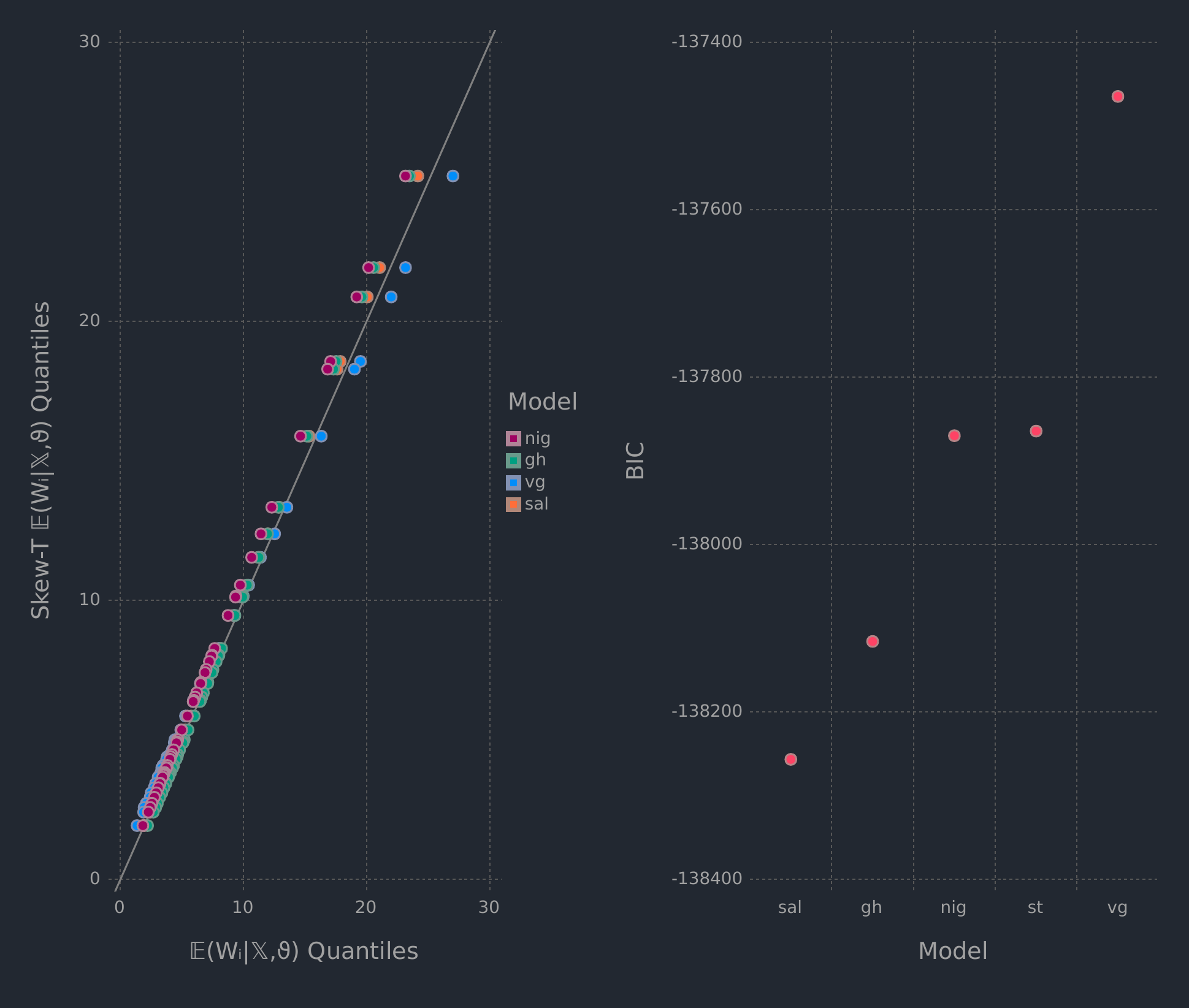}
		\caption{Left Panel: Quantile-quantile plot of $\mathbb{E}(W_{i}~|~\tenX_i,\hat{\bvtheta})$ from the $W_{i,g}$ distributions. The GIG for the TVGH distribution has $\omega = 4.92$ and $\lambda = 17.95$. These $\mathbb{E}(W_{i}~|~\tenX_i,\hat{\bvtheta})$ values directly impact the model parameter values (e.g.$\mathbb{E}[\fX]$). Right Panel: BIC values for each of the skewed models.}
		\label{fig:sim_ey2}
	\end{center}
\end{figure}  

This dissimilarity of the $W_{i,g}$ distributions between the models is reflected in the $\mathbb{E}(W_{i}~|~\tenX_i,\hat{\bvtheta})$ values learned from the data. Starting with two data sets from the skewed simulated data, where $n^*=512$ and $N=50$, we visualize the distribution of the $\mathbb{E}(W_{i}~|~\tenX_i,\hat{\bvtheta})$ values and the resulting model performance in figures \ref{fig:sim_ey1} and \ref{fig:sim_ey2}. The left hand panel of figure \ref{fig:sim_ey1} uses a qq-plot to visualize the distribution of the TVST $\mathbb{E}(W_{i}~|~\tenX_i,\hat{\bvtheta})$ values verses the distribution of the $\mathbb{E}(W_{i}~|~\tenX_i,\hat{\bvtheta})$ values from the other four tensor variate distributions. Recall that the data was generated from a TVST distribution with $\nu = 4$. The right hand panel includes the model BIC values. The distribution of the $W_{i,g}$'s from the TVGH model resembles the blue curve in figure \ref{fig:sim_w_dist} subplot D. This results in values of $\mathbb{E}(W_{i}~|~\tenX_i,\hat{\bvtheta})$ that are divergent from the TVST values and ultimately, in very poor relative model performance, as measured by BIC. The   poor relative performance is due to the effect the $\mathbb{E}(W_{i}~|~\tenX_i,\hat{\bvtheta})$ values have on the model parameter values (e.g.$\mathbb{E}[\fX]$). Contrast this with the results in figure \ref{fig:sim_ey2} where the distribution of $W_{i,g}$'s from the TVGH model is more akin to the distribution used to generate the simulated data. In this instance, the qq-plot indicates the distribution of the $\mathbb{E}(W_{i}~|~\tenX_i,\hat{\bvtheta})$ values between the TVST model and the other models is similar and the model performance between the 5 models is very comparable.

\section{Image Analysis}\label{sec:app_img}
\begin{figure}[!htb]
	\begin{center}
		\includegraphics[height=0.45\textwidth]{./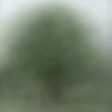}
		\caption{An image of the location tensor, $\tenM$, from the NIG model for the image data.}
		\label{fig:img_loc}
	\end{center}
\end{figure}  
\begin{figure}[!htb]
	\begin{center}
		\includegraphics[height=0.45\textwidth]{./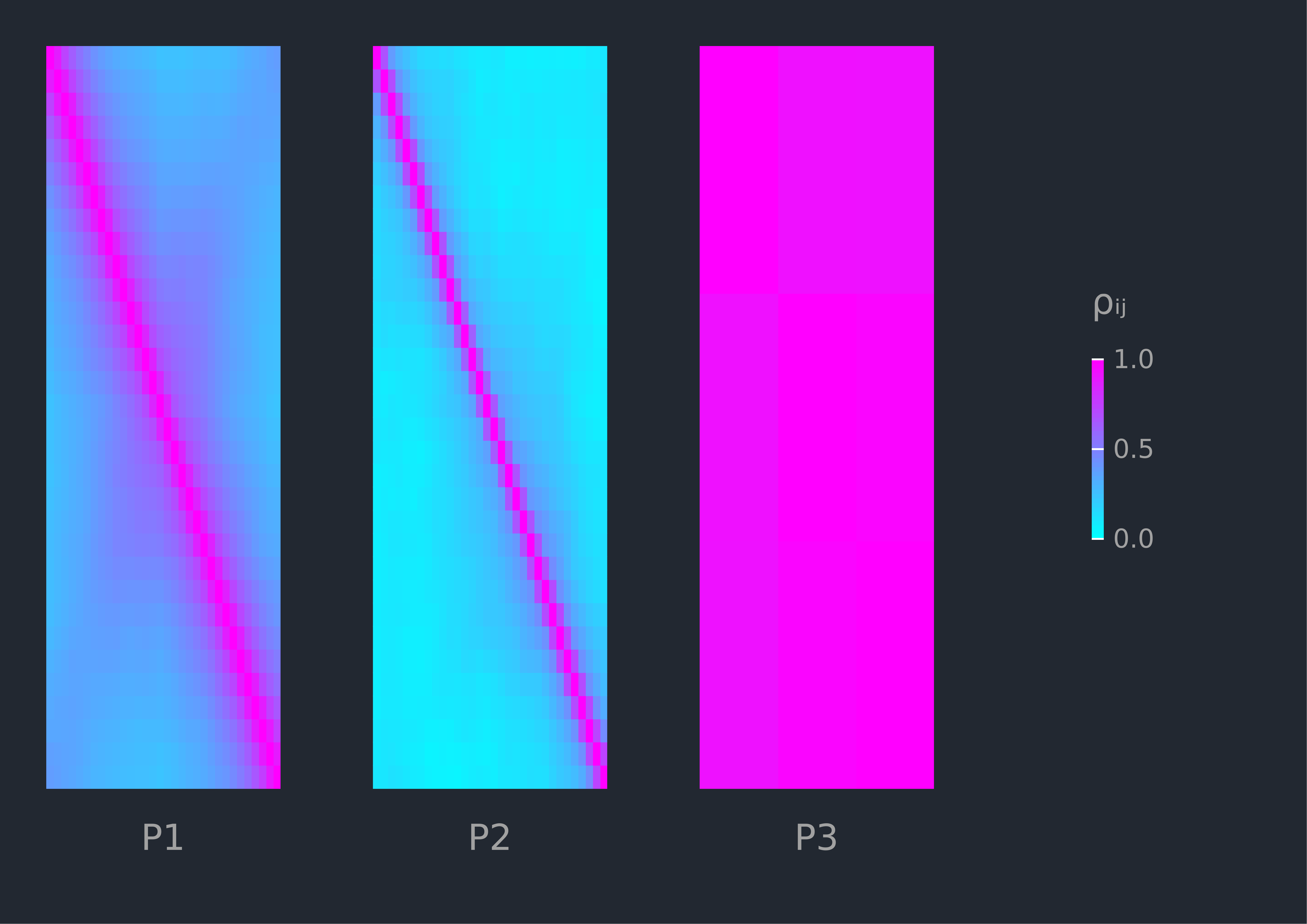}
		\caption{Correlation matrices, $\mathbf{P}_d$, from the NIG model for the image data.}
		\label{fig:img_cor}
	\end{center}
\end{figure}

We accessed the CIFAR-100 data through the {\tt MLDatasets.jl} package, version 0.5.3. We began with the training data and chose the images that corresponded to the class {\tt maple\_tree}. The maple tree images were converted from RGB arrays to an HSV format to filter out trees that did not have green or yellow leaves. 

The location tensor, $\tenM$ is visualized in figure \ref{fig:img_loc}. Like when looking at the mean, we still see that the image looks like a generic tree with a brown trunk and a blue sky.

Figure \ref{fig:img_cor} displays the three scale matrices, $\{\matdel_{d}\}_{d=1}^3$ as correlation matrices, $\{\mathbf{P}_{d}\}_{d=1}^3$. The correlation pattern for the rows and columns indicates entries close together are positively correlated and the correlation decreases for pixels that are further apart. This pattern is to be expected for image data. 

\end{document}